\def\RR{\mathbbm{R}}
\def\1{\mathbf{1}}
\def\0{\mathbf{0}}
\def\minimize{\textrm{minimize}}
\def\st{\textrm{subject to }}
\newcommand{\mean}[1]{\left\langle #1 \right\rangle}
\newcommand{\eg}{{\it{e.g.~}}}
\newtheorem{prop}{Proposition}
\newtheorem{theorem}[prop]{Theorem}
\newtheorem{definition}[prop]{Definition}
\newtheorem{lemma}[prop]{Lemma}
\renewcommand{\rho}{\varrho}
\newcommand{\processnext}[1]{%
  \ifx\listfinish#1\empty\else\listact{#1}\expandafter\processnext\fi}
\newcommand{\figref}[1]{Fig.~\ref{#1}}
\newcommand{\rafael}[1]{{ #1}}
\newcommand{\dani}[1]{{ #1}}
\newcommand{\lo}[1]{{ #1}}
\begin{document}
\title{{Causal h}ierarchy of multipartite Bell nonlocality}
\date{\today}

\author{R. Chaves}
\affiliation{International Institute of Physics, Federal University of Rio Grande do Norte, 59078-970, P. O. Box 1613, Natal, Brazil}
\affiliation{Institute for Theoretical Physics, University of Cologne, 50937 Cologne, Germany}
\author{D. Cavalcanti}
\affiliation{ICFO-Institut de Ciencies Fotoniques, The Barcelona Institute of Science and Technology, 08860 Castelldefels (Barcelona), Spain}
\author{L. Aolita}
\affiliation{Instituto de F\'isica, Universidade Federal do Rio de Janeiro, P. O. Box 68528, Rio de Janeiro, RJ 21941-972, Brazil}

\begin{abstract}
As with entanglement, different forms of Bell nonlocality arise in the multipartite scenario. These can be defined in terms of relaxations of the causal assumptions in local hidden-variable theories. However, a characterisation of all the forms of multipartite nonlocality has until now been out of reach, mainly due to the complexity of generic multipartite causal models. Here, we employ the formalism of Bayesian networks to reveal connections among different causal structures that make a both practical and physically meaningful classification possible. Our framework holds for arbitrarily many parties. We apply it to study the tripartite scenario in detail, where we fully characterize all the nonlocality classes. Remarkably, we identify new highly nonlocal causal structures that cannot reproduce all quantum correlations. This shows, to our knowledge, the strongest form of quantum multipartite nonlocality known to date. Finally, as a by-product result, we derive a non-trivial Bell-type inequality with no quantum violation. Our findings constitute a significant step forward in the understanding of multipartite Bell nonlocality and open several venues for future research.
\end{abstract}

\maketitle

\section{Introduction}
Bell nonlocality \cite{Bell1964,BCP+14} represents a fundamental and intriguing aspect of nature: Not all correlations observed in space-like separated measurements can be explained by any classical model respecting the causal assumptions of locality -- that the measurement outcomes depend only on local variables -- and measurement independence -- that the observers choose their measurement settings freely. Thus, understanding to what extent one has to give up the causal assumptions for a classical model to reproduce nonlocal correlations provides both insights into the nature of quantum correlations \cite{Spekkens2015,Ringbauer2016} and a natural way of quantifying them \cite{Hall2010,Hall2011,Putz2014,Chaves2015b,Chaves2016}. In view of that, the study of causal relaxations in Bell scenarios has become a topic of intense interest \cite{Hall2010,Hall2011,Putz2014,Chaves2015b,Chaves2016,Brans1988,Brassard99,Gisin99,Toner03,Pironio2003,Regev09,Barrett2010,Maxwell2014,brask2016bell}. In particular, in the bipartite scenario, it is known that all quantum correlations can be reproduced if either of the two causal assumptions are relaxed (see e.g. \cite{Hall2011}). However, this is not in general the case in the multipartite scenario.

In the $N$-partite case, even if $N-1$ parties communicate -- a relaxation of locality --, classical models cannot explain all quantum correlations \cite{Svetlichny1987,Collins2002,Seevinck2002}. This discovery gave rise to the notion of genuinely multipartite nonlocality (GMNL), with both fundamental and applied implications \cite{BGLP11,Moroder2013,Liang2014}. Since then, several forms of GMNL have been identified \cite{Jones2005,Bancal2009,ACSA10,Bancal2011,Gallego2012,Aolita12,Bancal2013,Saha2015}. However, a unifying picture of the models arising from all the different causal relaxations, together with the classes of nonlocality they lead to, has until now been an unrealistic goal. A significant obstacle is the rapidly increasing complexity of generic multipartite causal structures as $N$ grows.

Here, we develop a systematic characterisation of the \emph{classes of multipartite Bell nonlocality} in terms of the causal relaxations required for a classical causal model to explain all the correlations in the class. We use the formalism of Bayesian networks \cite{Pearl2009,Spirtes2001}, which allows us to identify equivalences among different causal relaxations in the context of nonsignaling correlations
. As a result, one can define \emph{causal classes of Bell correlations} each of which groups together many different causal structures.
This enormously simplifies the problem, rendering a practical characterisation possible. Additionally, the classification has the built-in advantage of automatically yielding a natural hierarchy, from which one can directly read which nonlocality classes are stronger than others in the inclusion sense.

We develop the formalism in full generality and discuss in detail the tripartite scenario, where a full characterisation of Bell nonlocality is given. The hierarchy delivers 10 classes of tripartite causal models with nonsignaling violations.
Interestingly, we prove that, in the nonsignaling scenario, 3 classes in different levels of the hierarchy collapse to a same single class. This leaves us with 8 inequivalent causal Bell classes. From these, at least 7 are violated by quantum correlations, including 1 class for which no quantum violations were known to date \cite{Jones2005}. This proves that nature is nonlocal in a stronger sense than previously known, closing a long-standing open question  \cite{Jones2005}.
Interestingly, the causal class for which we could not find a quantum violation produces correlations able to win, with unit probability, the celebrated nonlocal game without a quantum advantage "guess your neighbour's input" (GYNI) \cite{Almeida2010}. We identify a non-trivial Bell-type inequality with no quantum violation for this class, which can also be interesting on its own \cite{Winter2010}.

\section{Scenario}
We consider the correlations arising when $N$ parties perform local measurements on their respective shares of a joint $N$-partite system. These correlations are described by a conditional probability distribution $p(a_1,\dots,a_N \vert x_1, \dots, x_N)$,
where $x_i$ and $a_i$ label, respectively, the measurement choices (inputs) and outcomes (outputs) of the $i$-th party, for $i=1, \hdots , N$. If the parties are space-like separated, the correlations must be nonsignaling. That is, the marginal conditional probabilities $p(a_1,\dots, a_{i-1},a_{i+1},\dots,a_N \vert x_1,\dots, x_N)\coloneqq\sum_{a_i}\, p(a_1,\dots, a_i,\dots, a_N \vert x_1,\dots,x_N)$ over all but the $i$-th outcome must be independent on the $i$-th input \cite{BCP+14}:
\begin{align}
\label{eq:no_signalling}
\nonumber
&p(a_1,\dots, a_{i-1},a_{i+1},\dots,a_N \vert x_1,\dots, x_N)=\\
&p(a_1,\dots, a_{i-1},a_{i+1},\dots,a_N \vert x_1, \dots, x_{i-1}, x_{i+1},\dots, x_N),
\end{align}
for all $a_1,\dots, a_{i-1},a_{i+1},\dots,a_N$,  all $x_1,\dots, x_N$, and all $i$.
This means that the input choice by the $i$-th party cannot influence the statistics observed by the others. Our goal is to study the nonsignaling correlations that can arise from different causal structures where relaxations of measurement independence  and locality are allowed.

Causal structures can be represented with \emph{directed acyclic graphs} (DAGs) \cite{Pearl2009}, examples of which are shown in Figs. \lo{\ref{fig:equiv}, \ref{fig:belleq}, and \ref{fig:hierarchy}}. Each node of a DAG represents a classical random variable, and each directed edge encodes a causal relation between two nodes. For each edge, one calls the start vertex the \emph{parent} and the arrival one the \emph{child}. The acyclicity of the graph prevents an effect from being its own cause. Then, given a collection of variables $\mathcal{V}=\{v_1, v_2, \dots,v_n\}$, $\mathcal{V}$ forms a \emph{Bayesian network} with respect to a DAG $\mathcal{G}$ if the joint probability distribution $p(v_1,\dots,v_n)$ describing the statistics of $\mathcal{V}$ can be decomposed as
\begin{equation}
\label{markov}
p(v_1,\dots,v_n)= \prod^{n}_{i} p(v_i \vert {\rm pa}(v_i)),
\end{equation}
where ${\rm pa}(v_i)$ denotes the set of parents of $v_i$ according to $\mathcal{G}$. Here, we are interested in a specific subclass of structures with two common features.
First, they all possess an unobservable node, the \emph{hidden variable} $\lambda$, and two sets of observable ones, the inputs $x_1, \dots, x_N$ and the outputs  $a_1,\dots,a_N$, i.e. $\mathcal{V}_\text{Bell}=\{\lambda, a_1,\dots,a_N, x_1, \dots, x_N\}$. Second, each $i$-th output $a_i$ contains $x_i$ and $\lambda$ as parents, i.e. ${\rm pa}(a_i)\supseteq \{x_i,\lambda\}$.
We refer to these DAGs as \emph{Bell DAGs} (BDAGs).

The simplest BDAG, shown at the top of Figs. \ref{fig:belleq} and \ref{fig:hierarchy} for $N=3$, is the one for which ${\rm pa}(a_i) = \{x_i,\lambda\}$ for all $i$. This gives rise to the so-called \emph{local hidden-variable} (LHV) models, with correlations of the form \cite{Bell1964}:
\begin{align}
\label{p_classical}
\nonumber
&p(a_1,\dots,a_N \vert x_1, \dots, x_N)= \\
&\sum_{\lambda} p(\lambda)p(a_1 \vert x_1,\lambda) \hdots p(a_N \vert x_N,\lambda).
\end{align}
More complex causal structures are obtained by considering causal relaxations of the LHV BDAG. Relaxations of measurement independence \cite{Hall2010,Barrett2010,Hall2011} and locality \cite{Brassard99, Gisin99,Toner03,Regev09} have been studied in the bipartite scenario. In the multipartite case, communication among $N-1$ out of $N$ parties is allowed in \emph{bi-LHV} models \cite{Svetlichny1987,Collins2002,Seevinck2002,Bancal2009}, while causal influences from the input of one party towards the outputs of all others is accounted for in \emph{input-broadcasting} models \cite{Bancal2009}.
However, a systematic classification of $N$-partite causal structures for all the different causal relaxations was missing \footnote{Both bi-LHV and input-broadcasting models are particular cases of a generic family called \emph{partially paired models} in Ref. \cite{Jones2005} (see Fig. \ref{fig:hierarchy}). An input-communication structure is called \emph{totally paired} if every pair of inputs is sent to some output; otherwise it is partially paired. Partially paired models were shown to satisfy the Svetlichny inequality, which admits quantum violations. In contrast, it was an open question whether there exist totally paired models with quantum violations \cite{Jones2005}.}. The problem consists of organising, in a physically meaningful and practical way, objects with an exponential complexity in the number of parties. In what follows, we propose a solution to this problem and, on the way, give a positive answer to the above-mentioned open question.

\section{Bell classes of multipartite causal networks}
 Our classification relies on two technical results that connect different causal relaxations in the nonsignaling framework.
We say that a BDAG $\mathcal{G}_1$ \emph{nonsignaling implies} another $\mathcal{G}_2$, if every nonsignaling correlations compatible with $\mathcal{G}_1$ (i.e., produced by a Bayesian network with respect to it) are also compatible with $\mathcal{G}_2$. Note that, if all the causal relaxations in $\mathcal{G}_1$ are also present in $\mathcal{G}_2$, $\mathcal{G}_1$ automatically nonsignaling implies $\mathcal{G}_2$.
In addition, if $\mathcal{G}_1$ and $\mathcal{G}_2$ nonsignaling imply one another, we say that they are \emph{nonsignaling equivalent}. In particular, notice that if two BDAGs that are nonsignaling equivalent, the maximal violation of any given Bell inequality is the same and the correlations produced are useful for the same information-theoretic protocols.

The first result asserts that the most general causal influence from one party to another is nonsignaling equivalent to a single locality relaxation from the input of the former towards the output of the latter. This is depicted in Fig. \ref{fig:equiv} a) for $N=2$. Thus, any particular locality relaxation is nonsignaling implied by the input-to-output one.
\begin{lemma}[Generic locality relaxation $\leftrightarrow$ input-to-output locality relaxation]
\label{lemma:input-to-output}
Let $\mathcal{G}_{\rm gen}$ and $\mathcal{G}_{\rm in-out}$ be any two BDAGs whose only difference is that there exists $1\le j\neq i\le N$ such that, for $\mathcal{G}_{\rm gen}$, ${\rm pa}(x_j)\supseteq \{a_i,x_i\}$ and ${\rm pa}(a_j)\supseteq \{a_i,x_i\}$, whereas, for $\mathcal{G}_{\rm in-out}$, ${\rm pa}(a_j)\supseteq \{x_i\}$. Then, $\mathcal{G}_{\rm gen}$ and $\mathcal{G}_{\rm in-out}$ are nonsignaling equivalent.
\end{lemma}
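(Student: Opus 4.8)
The statement unpacks into two nonsignaling implications, and the plan is to treat them separately. The implication $\mathcal{G}_{\rm in-out}\Rightarrow\mathcal{G}_{\rm gen}$ is immediate: the edge set of $\mathcal{G}_{\rm in-out}$ is contained in that of $\mathcal{G}_{\rm gen}$ (the only extra parents in $\mathcal{G}_{\rm gen}$ are the edges $a_i\to x_j$, $x_i\to x_j$ and $a_i\to a_j$), so any Bayesian network with respect to $\mathcal{G}_{\rm in-out}$ is already one with respect to $\mathcal{G}_{\rm gen}$, as noted in the remark preceding the lemma. Hence every nonsignaling correlation of $\mathcal{G}_{\rm in-out}$ is a correlation of $\mathcal{G}_{\rm gen}$, and all the work lies in the converse.

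For $\mathcal{G}_{\rm gen}\Rightarrow\mathcal{G}_{\rm in-out}$ I would start from an arbitrary nonsignaling $p(a_1,\dots,a_N\vert x_1,\dots,x_N)$ realised by a Bayesian network with respect to $\mathcal{G}_{\rm gen}$ and reconstruct it with one with respect to $\mathcal{G}_{\rm in-out}$. Since the two DAGs agree everywhere except on the parents of $x_j$ and $a_j$, it suffices to dispose of the three offending edges. First I would write out the conditional $p(\vec a\vert\vec x)$ obtained from the $\mathcal{G}_{\rm gen}$ factorization; for the pair $(i,j)$ this contains a reweighting factor $p(x_j\vert a_i,x_i)/Z(\vec x)$ coming precisely from the edge $a_i\to x_j$. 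The key observation is that nonsignaling from party $j$ to party $i$ forces the marginal $p(a_i\vert\vec x)$ to be independent of $x_j$, and a short calculation shows this can hold for all inputs only if $p(x_j\vert a_i,x_i)=p(x_j\vert x_i)$, i.e. $x_j$ is conditionally independent of $a_i$ given $x_i$. Thus the edge $a_i\to x_j$ is observationally vacuous and may be deleted; the remaining input--input edge $x_i\to x_j$ affects only the input marginal $p(\vec x)$ and cancels in the conditional, so it is harmless and can be replaced by a free product input for $x_j$.

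After these reductions the correlation takes the form $p(\vec a\vert\vec x)=\sum_\lambda p(\lambda)\,p(a_i\vert x_i,\lambda)\,p(a_j\vert x_j,x_i,a_i,\lambda)\,(\cdots)$, where the only remaining illegal dependence is that of $a_j$ on $a_i$. To remove it I would determinise the model in the standard way, enlarging $\lambda$ so that every response becomes a deterministic function of its parents; then $a_i=f_i(x_i,\lambda)$ and the $j$-response $f_j(x_j,x_i,a_i,\lambda)$ can be composed into $\tilde f_j(x_j,x_i,\lambda):=f_j\big(x_j,x_i,f_i(x_i,\lambda),\lambda\big)$, since party $j$ already holds $x_i$ and the shared $\lambda$ and can therefore recompute $a_i$ internally. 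The resulting network uses for $a_j$ only the parents $\{x_j,x_i,\lambda\}$ allowed by $\mathcal{G}_{\rm in-out}$, reproduces $p(\vec a\vert\vec x)$ exactly, and is acyclic, which closes the converse. For general $N$ the argument localises to the pair $(i,j)$: the remaining parties carry an identical substructure in both DAGs and act as spectators absorbed into $\lambda$, so the same three steps apply verbatim.

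The step I expect to be the genuine obstacle is the nonsignaling elimination of the edge $a_i\to x_j$: one must make precise the meaning of ``correlation compatible with $\mathcal{G}_{\rm gen}$'' when $x_j$ is not a root but a child of $a_i,x_i$ (so that conditioning on $\vec x$ reweights the hidden variable), and then show that nonsignaling is exactly the condition that trivialises this reweighting. By contrast, determinisation and the internal recomputation of $a_i$ are routine, and the input--input edge requires only bookkeeping of the input marginal.
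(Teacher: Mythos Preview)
Your proposal is correct and follows essentially the same approach as the paper: the trivial inclusion, then nonsignaling to neutralise the $a_i\to x_j$ reweighting, plus determinisation to absorb $a_i$ into $a_j$'s response via the shared $\lambda$. The only cosmetic difference is ordering and phrasing: you first argue that nonsignaling forces $p(x_j\vert a_i,x_i)=p(x_j\vert x_i)$ at the level of the mechanism and then determinise, whereas the paper determinises first and eliminates the resulting factor $p(x_2\vert\lambda)/p(x_2)$ by expanding over a dummy $a_1'$ and invoking $p(a_1'\vert x_2)=p(a_1')$; these are the same computation read in opposite directions.
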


\begin{figure}[t!]
\center
\includegraphics[width=0.8\columnwidth]{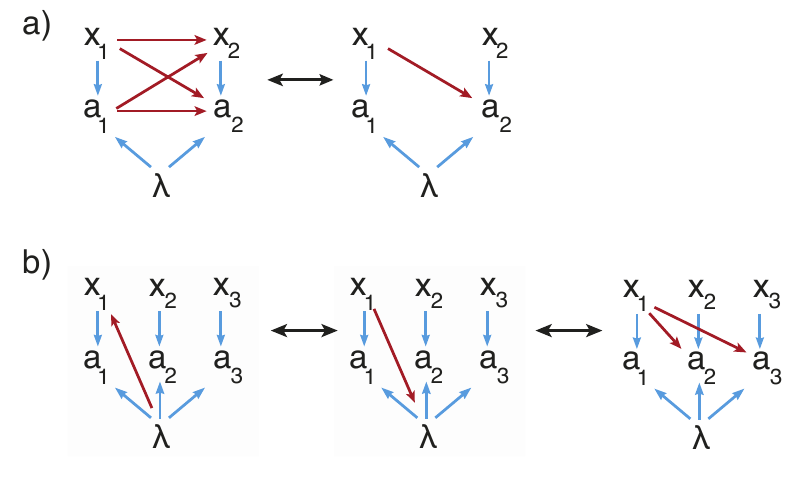}
\caption{a) The set of nonsignaling correlations produced by the most general locality relaxation from one party to another in the left-hand side coincides with the one produced by the input-to-output locality relaxation in the right-hand one. b) The relaxations of measurement independence in the left-hand side and the centre produce the same set of nonsignaling correlations as the input-broadcasting model in the right-hand side.}
\label{fig:equiv}
\end{figure}

\begin{proof}
We prove this lemma explicitly for the particular case of BDAGs of $N=2$ parties. The proof for the general case is totally analogous.
We have to prove the implication relations between the DAGs in Fig. 2a). The most general locality relaxation between two parties is represented by the BDAG $\mathcal{G}_{\rm gen}$ in the left-hand pannel of Fig 2a). The simpler DAG $\mathcal{G}_{\rm in-out}$ is represented in the right-hand pannel of Fig. 2a).
Clearly, since all the causal relaxations in $\mathcal{G}_{\rm in-out}$ belong also to the set of causal relaxations in $\mathcal{G}_{\rm gen}$, $\mathcal{G}_{\rm in-out}$ nonsignalling implies $\mathcal{G}_{\rm gen}$. We now prove that the converse also holds true, thus proving that both DAGs are nonsignaling equivalent.

Any probability distribution compatible with $\mathcal{G}_{\rm gen}$ can be decomposed as
\begin{align}
\label{wiring1}
\nonumber
& p(a_1,a_2 \vert x_1,x_2) \\ & = \sum_{\lambda} p(a_1,a_2,\lambda \vert x_1,x_2) \\
&= \sum_{\lambda} p(a_1, a_2 \vert x_1, x_2,\lambda) p(\lambda \vert x_1,x_2)
\\
\label{eq:cond_independence_app0}
&= \sum_{\lambda} p(a_1, a_2 \vert x_1, x_2,\lambda) p(\lambda \vert x_2) \\
\label{eq:cond_independence_app}
&  = \sum_{\lambda}  p(a_1 \vert x_1,\lambda)p(a_2 \vert a_1,x_1,x_2,\lambda) \frac{p(x_2\vert\lambda)}{p(x_2)} p(\lambda)\\
\label{eq:det_func}
&  = \sum_{\lambda}  p(a_1 \vert x_1,\lambda)p( a_2 \vert x_1, x_2,\lambda)  \frac{p(x_2\vert\lambda)}{p(x_2)} p(\lambda),
\end{align}
where Bayes' rule has been repeatedly used. In addition, Eq. \eqref{eq:cond_independence_app0} follows, using  Eq. (2),  from the fact that, for the BDAG  $\mathcal{G}_{\rm gen}$, $x_1$ is neither a descendant nor a parent of $\lambda$. In turn, Eqs. \eqref{eq:cond_independence_app} and  \eqref{eq:det_func} use the fact that, if the hidden variable $\lambda$ can take sufficiently many values (which can, without loss of generality, always be assumed to be the case), the outcome $a_1$ can be taken as a deterministic function of $x_1$ and $\lambda$, so that $p(a_1 \vert x_1,x_2,\lambda)=p( a_1 \vert x_1,\lambda)$ and $p(a_2 \vert a_1,x_1,x_2,\lambda)=p( a_2 \vert x_1, x_2,\lambda)$.

On the other hand, since $\lambda$ is an ancestor of $x_2$ ($\lambda$ is a parent of $a_1$, who is a parent of $x_2$), one has, in general, that $p(x_2\vert\lambda)\neq p(x_2)$.
However, noting that
\begin{eqnarray}
\label{simplification}
p(x_2 \vert\lambda) & & = \sum_{a_1} p(x_2,a_1\vert\lambda) = \sum_{a_1} p(x_2\vert a_1, \lambda)p( a_1 \vert\lambda) \\ \nonumber
& & = \sum_{a_1} p( x_2 \vert a_1) p(a_1 \vert\lambda),
\end{eqnarray}
we can re-express $p(a_1,a_2 \vert x_1,x_2)$ as
\begin{eqnarray}
\label{wirin2}
& & p(a_1,a_2 \vert x_1,x_2)  = \\ \nonumber
& & \sum_{\lambda,\, a_1^{\prime}}  p(a_1 \vert x_1,\lambda)p( a_2 \vert x_1, x_2,\lambda) p( x_2 \vert a_1^{\prime}) p(a_1^{\prime} \vert \lambda)\frac{1}{p(x_2)} p(\lambda).
\end{eqnarray}
Now, we note that
\begin{align}
\label{eq:use_no_signalling}
\frac{p( x_2 \vert a_1^{\prime})}{p(x_2)}=\frac{p(a_1^{\prime} \vert x_2)}{p(a_1^{\prime})}=1,
\end{align}
where the first equality follows from Bayes' rule and the second one from the nonsignalling constraints (Eq. (1)).
Therefore, Eq. \eqref{wirin2} rewrites
\begin{eqnarray}
& & p(a_1,a_2 \vert x_1,x_2)  = \\ \nonumber
 & & \sum_{\lambda,\, a_1^{\prime}}  p(a_1 \vert x_1,\lambda)p( a_2 \vert x_1, x_2,\lambda) p(a_1^{\prime} \vert \lambda)p(\lambda).
\end{eqnarray}
Finally, using that $\sum_{a_1^{\prime}}p(a_1^{\prime} \vert \lambda)=1$, we arrive at
\begin{eqnarray}
\label{eq:last_proof_lema_1}
& & p(a_1,a_2 \vert x_1,x_2)  = \\ \nonumber
& & \sum_{\lambda}  p(a_1 \vert x_1,\lambda)p( a_2 \vert x_1, x_2,\lambda) p(\lambda).
\end{eqnarray}
This is manifestly the explicit expression of generic correlations produced by Bayesian networks with respect to the IO BDAG $\mathcal{G}_{\rm in-out}$, which finishes the proof.
\end{proof}

The second result we will use states that allowing for a direct causation between any input and $\lambda$ is nonsignaling equivalent to broadcasting the input to the outputs of all $N-1$ other parties [see  Fig. \ref{fig:equiv} b) for the case $N=3$].
\begin{lemma}[Measurement-independence relaxation $\leftrightarrow$ input broadcasting]
\label{lemma:MD}
Let $\mathcal{G}^{(1)}_{\rm mir}$, $\mathcal{G}^{(2)}_{\rm mir}$, and $\mathcal{G}_{\rm ib}$ be any three BDAGs whose only differences are that there exists $1\le i\le N$ such that, for $\mathcal{G}^{(1)}_{\rm mir}$, $\lambda\in{\rm pa}(x_i)$, for $\mathcal{G}^{(2)}_{\rm mir}$, $x_i\in{\rm pa}(\lambda)$, and
 for $\mathcal{G}_{\rm ib}$, $x_i\in{\rm pa}(a_j)$ for all $1\leq j\leq N$. Then, $\mathcal{G}^{(1)}_{\rm mir}$, $\mathcal{G}^{(2)}_{\rm mir}$, and $\mathcal{G}_{\rm ib}$ are nonsignaling equivalent.
\end{lemma}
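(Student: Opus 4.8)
The plan is to prove the lemma by establishing two chains of nonsignaling equivalence, $\mathcal{G}^{(1)}_{\rm mir}\leftrightarrow\mathcal{G}^{(2)}_{\rm mir}$ and $\mathcal{G}^{(2)}_{\rm mir}\leftrightarrow\mathcal{G}_{\rm ib}$, from which equivalence of all three BDAGs follows by transitivity. The template is the proof of Lemma~\ref{lemma:input-to-output}: write the explicit correlation produced by each graph in terms of its free conditionals via the factorisation \eqref{markov}, and then rewrite one decomposition into another. A structural difference from Lemma~\ref{lemma:input-to-output} is that here none of the three graphs is obtained from another simply by adding a relaxation, so the ``automatic'' implication is unavailable and each direction needs its own explicit construction. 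I would argue for a fixed broadcast index $i$ and general $N$, since every manipulation touches only the $\lambda$--$x_i$ part of the network.

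For $\mathcal{G}^{(1)}_{\rm mir}\leftrightarrow\mathcal{G}^{(2)}_{\rm mir}$ the content is reversing the single edge between $\lambda$ and $x_i$. In $\mathcal{G}^{(1)}_{\rm mir}$ one has $\lambda\in{\rm pa}(x_i)$, so $x_i$ carries the factor $p(x_i\vert\lambda)$ and $p(x_1,\dots,x_N)=p(x_i)\prod_{k\neq i}p(x_k)$; dividing the joint by this marginal and invoking Bayes' rule, $p(\lambda)\,p(x_i\vert\lambda)/p(x_i)=p(\lambda\vert x_i)$, collapses the observable correlation to $\sum_\lambda p(\lambda\vert x_i)\prod_j p(a_j\vert x_j,\lambda)$. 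This is exactly the correlation of $\mathcal{G}^{(2)}_{\rm mir}$, where $x_i\in{\rm pa}(\lambda)$. For the converse I would note that any target weight $p(\lambda\vert x_i)$ is realised in $\mathcal{G}^{(1)}_{\rm mir}$ by fixing a full-support prior on $x_i$ and reading off $p(\lambda)$ and $p(x_i\vert\lambda)$ from the resulting joint, so the two correlation sets coincide.

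For $\mathcal{G}^{(2)}_{\rm mir}\leftrightarrow\mathcal{G}_{\rm ib}$ I would use two symmetric hidden-variable reparametrisations, exploiting that $\lambda$ may take arbitrarily many values. In the direction $\mathcal{G}^{(2)}_{\rm mir}\to\mathcal{G}_{\rm ib}$, enlarge $\lambda$ into a lookup table $\lambda'=(\lambda_y)_y$ holding one independent draw $\lambda_y\sim p(\cdot\vert x_i=y)$ per value $y$ of the broadcast input, with prior $p(\lambda')=\prod_y p(\lambda_y\vert x_i=y)$; since $x_i$ is broadcast to every output in $\mathcal{G}_{\rm ib}$, party $j$ responds with $p(a_j\vert x_j,\lambda_{x_i})$, and marginalising the unused entries (each summing to one) recovers $\sum_\lambda p(\lambda\vert x_i)\prod_j p(a_j\vert x_j,\lambda)$. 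In the direction $\mathcal{G}_{\rm ib}\to\mathcal{G}^{(2)}_{\rm mir}$, absorb the broadcast input into the hidden variable: set $\lambda'=(\lambda,y)$ with $p(\lambda'{=}(\lambda,y)\vert x_i)=p(\lambda)\,\delta_{y,x_i}$ and define $p(a_j\vert x_j,\lambda')\coloneqq p(a_j\vert x_j,x_i{=}y,\lambda)$, which reproduces the input-broadcasting correlation while routing all $x_i$-dependence through the single edge $x_i\to\lambda'$.

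The step I expect to be the main obstacle is this last direction, $\mathcal{G}_{\rm ib}\to\mathcal{G}^{(2)}_{\rm mir}$. The tempting route is to strip the dependence on $x_i$ out of the conditionals $p(a_j\vert x_j,x_i,\lambda)$ directly, in the spirit of Eq.~\eqref{eq:use_no_signalling}, using the nonsignaling constraint that the marginal over all outputs but $a_i$ is independent of $x_i$; but keeping each intermediate object a properly normalised conditional distribution along that route is delicate. I would instead rely on the hidden-variable absorption above, and finish by verifying that the three explicit decompositions are mutually reachable, which closes the argument and establishes the claimed nonsignaling equivalence.
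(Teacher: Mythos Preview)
Your proposal is correct and follows essentially the same route as the paper: both arguments reduce $\mathcal{G}^{(1)}_{\rm mir}\leftrightarrow\mathcal{G}^{(2)}_{\rm mir}$ to Bayes' rule, and both handle $\mathcal{G}_{\rm ib}\to\mathcal{G}^{(2)}_{\rm mir}$ by absorbing the broadcast input $x_i$ into a relabelled hidden variable. The paper implements that absorption through deterministic response functions (rewriting $g_{\lambda_j}^{(j)}(x_i,x_j)$ as $f_{\lambda_j'}^{(j)}(x_j)$ with $\lambda_j'=\gamma_j(\lambda_j,x_i)$), whereas you work directly with conditionals via $\lambda'=(\lambda,y)$; these are the same manoeuvre in different dress. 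One point where your write-up is cleaner: the paper leaves the reverse inclusion $\mathcal{G}^{(2)}_{\rm mir}\to\mathcal{G}_{\rm ib}$ implicit (its derivation only exhibits that every $\mathcal{G}_{\rm ib}$ correlation has the $\mathcal{G}^{(2)}_{\rm mir}$ form, relying on that form being generic), while your lookup-table construction $\lambda'=(\lambda_y)_y$ makes that direction explicit and constructive. Your worry that $\mathcal{G}_{\rm ib}\to\mathcal{G}^{(2)}_{\rm mir}$ is the hard step is misplaced---it is the easier of the two, and neither direction actually needs the nonsignaling constraint (unlike Lemma~\ref{lemma:input-to-output}).
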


\begin{proof}
We will prove this lemma explicitly for the particular case of $N=3$ parties. The proof for the general case is totally analogous. The proof strategy consists in showing that, if nonsignalling holds, the most general expression for a correlation compatible with the input-broadcasting  BDAG $\mathcal{G}_{\rm ib}$ in the right-hand panel of Fig. \ref{fig:equiv}
 b) coincides with the most general expressions for correlation compatible with the measurement-dependence BDAGs $\mathcal{G}^{(1)}_{\rm mir}$ and $\mathcal{G}^{(2)}_{\rm mir}$ in the
left-hand and central panels, respectively, of Fig. \ref{fig:equiv}
b). This implies that the set of nonsignalling correlations compatible with $\mathcal{G}_{\rm ib}$ and the set of nonsignalling correlations compatible with $\mathcal{G}^{(1)}_{\rm mir}$ and $\mathcal{G}^{(2)}_{\rm mir}$ are indeed equivalent.

The most general correlation produced by a Bayesian network with respect to $\mathcal{G}_{\rm ib}$, where the setting $x_1$ is a cause of $a_2$, is
\begin{align}
\label{eq:set_broad}
& p(a_1,a_2,a_3 \vert x_1,x_2,x_3) \\ \nonumber  &=  \sum_{\lambda} p(a_1 \vert  x_1,\lambda)\,p(a_2 \vert  x_1,x_2,\lambda)\, p(a_3 \vert  x_1,x_3,\lambda)\,p(\lambda)\\ 
\label{eq:Det_resp_func}
&=\sum_{\lambda_1,\lambda_2,\lambda_3} D^{(1)}_{\lambda_1}( a_1 \vert  x_1)D^{(2)}_{\lambda_2}(a_2\vert x_1, x_2)D^{(3)}_{\lambda_3}(a_3\vert x_1, x_3)\,p(\lambda_1,\lambda_2,\lambda_3).
\end{align}
Eq. \eqref{eq:set_broad} follows from Eq. (2) in the main text with respect to $\mathcal{G}_{\rm ib}$. The right-hand side of  Eq. \eqref{eq:Det_resp_func}, in turn, is simply the expression of the right-hand side of Eq. \eqref{eq:set_broad} in terms of the local deterministic response functions $D^{(i)}_{\lambda_i}$ of each output $a_i$ given its parent inputs with respect to $\mathcal{G}_{\rm ib}$, for $i=1,2,3$, for which we have explicitly decomposed $\lambda$ as the tri-index variable $\lambda=\lambda_1,\lambda_2,\lambda_3$, with $\lambda_i$ labelling the local deterministic strategy of the $i$-th party. More precisely, the local deterministic response functions are defined as
\begin{align}
\label{eq:det_strat_explicit}
& D^{(1)}_{\lambda_1}( a_1 \vert  x_1)\coloneqq\delta_{a_1,f^{(1)}_{\lambda_1}(x_1)}, \\
& D^{(2)}_{\lambda_2}( a_2 \vert  x_1,x_2)\coloneqq\delta_{a_2,g^{(2)}_{\lambda_2}(x_1,x_2)},\\
& D^{(3)}_{\lambda_3}( a_3 \vert  x_1,x_3)\coloneqq\delta_{a_3,g^{(3)}_{\lambda_3}(x_1,x_3)},
\end{align}
with $\delta$ denoting the Kronecker delta, $f^{(1)}_{\lambda_1}$ the $\lambda_1$-th local deterministic assignment of $x_1$ into $a_1$, and $g^{(i)}_{\lambda_i}$ the $\lambda_i$-th local deterministic assignment of $x_1$ and $x_i$ into $a_i$, for $i=2,3$. Note that the deterministic functions $f^{(1)}_{\lambda_1}$ have a single argument, while the deterministic functions $g^{(2)}_{\lambda_2}$ and $g^{(3)}_{\lambda_3}$ have  two arguments.
There are altogether $|\Lambda_1|= |A_1|^{|X_1|}$ different assignments $f^{(1)}_{\lambda_1}$ and $|\Lambda_i|= |A_i|^{|X_1|\times|X_i|}$ different assignments $g^{(i)}_{\lambda_i}$, for $i=2,3$, where $|X_i|$ and $|A_i|$ denote the numbers of inputs and outputs, respectively, of the $i$-th party, for $i=1,2,3$. This gives a total of $|\Lambda|= |\Lambda_1|\times|\Lambda_2|\times|\Lambda_3|$ different global deterministic strategies.

Now, for any fixed $\lambda_i$ and $x_1$, the two-argument assignment $g^{(i)}_{\lambda_i}(x_1,x_i)$, for each $i=1,2$, defines a single-argument deterministic assignment of $x_i$ into $a_i$. Since there are $|\Lambda_i|= |A_i|^{|X_1|\times|X_i|}$ different values of $\lambda_i$ and $|X_1|$ different values of $x_1$, but only $|A_i|^{|X_i|}$ different  deterministic assignments of $x_i$ into $a_i$ possible, $g^{(i)}_{\lambda_i}(x_1,x_i)$ will define the same  deterministic single-argument function of $x_i$ for $\frac{|\Lambda_i|\times|X_1|}{|A_i|^{|X_i|}}=\frac{|A_i|^{|X_1|\times|X_i|}\times|X_1|}{|A_i|^{|X_i|}}=|A_i|^{|X_i|\times(|X_1|-1)}\times|X_1|$
different pairs $(\lambda_i,x_1)$.
Hence, for $i=1,2$, we can introduce $|{\Lambda_i}^{\prime}|\coloneqq|A_i|^{|X_i|}$ different single-argument deterministic functions $f^{(i)}_{{\lambda_i}^{\prime}}$ of $x_i$, explicitly defined by
\begin{align}
\label{eq:def_det_assign}
f^{(i)}_{{\lambda_i}^{\prime}}(x_i)\coloneqq g^{(i)}_{\lambda_i}(x_1,x_i)\text{, with }{\lambda_i}^{\prime}\coloneqq\gamma_i(\lambda_i,x_1),\ \forall\ x_i\in|X_i|,
\end{align}
where we have also introduced $\gamma_i:|\Lambda_i|\times|X_1|\to|{\Lambda_i}^{\prime}|$ as the map that takes all $|\Lambda_i|\times|X_1|$ pairs $(\lambda_i,x_1)$ into the $|{\Lambda_i}^{\prime}|$ different deterministic assignments ${\lambda_i}^{\prime}$ of $x_i$ into $a_i$. Then,  Eqs. \eqref{eq:det_strat_explicit} and \eqref{eq:def_det_assign} imply for the right-hand side of Eq. \eqref{eq:Det_resp_func} that
\begin{widetext}
\begin{align}
\label{eq:set_broad2}
\nonumber
\sum_{\lambda_1,\lambda_2,\lambda_3} D^{(1)}_{\lambda_1}( a_1 \vert  x_1)\,D^{(2)}_{\lambda_2}(a_2\vert x_1, x_2)\,D^{(3)}_{\lambda_3}(a_3\vert x_1, x_3)\,p(\lambda_1,\lambda_2,\lambda_3)
&=\\
\nonumber
\sum_{\substack{\lambda_1,{\lambda_2}^{\prime},{\lambda_3}^{\prime}\\ \lambda_2:\gamma_2(\lambda_2,x_1)={\lambda_2}'\\ \lambda_3:\gamma_3(\lambda_3,x_1)={\lambda_3}'}}
D^{(1)}_{\lambda_1}( a_1 \vert  x_1)\,D^{(2)}_{{\lambda_2}^{\prime}}(a_2\vert x_2)\,D^{(3)}_{{\lambda_3}^{\prime}}(a_3\vert  x_3)\,p(\lambda_1,\lambda_2,\lambda_3)
&=\\
\nonumber
\sum_{\lambda_1,{\lambda_2}^{\prime},{\lambda_3}^{\prime}} D^{(1)}_{\lambda_1}( a_1 \vert  x_1)\,D^{(2)}_{{\lambda_2}^{\prime}}(a_2\vert x_2)\,D^{(3)}_{{\lambda_3}^{\prime}}(a_3\vert  x_3)
\sum_{\substack{\lambda_2:\gamma_2(\lambda_2,x_1)={\lambda_2}'\\ \lambda_3:\gamma_3(\lambda_3,x_1)={\lambda_3}'}}p(\lambda_1,\lambda_2,\lambda_3)
&=\\
\sum_{\lambda_1,{\lambda_2}^{\prime},{\lambda_3}^{\prime}} D^{(1)}_{\lambda_1}( a_1 \vert  x_1)D^{(2)}_{{\lambda_2}^{\prime}}(a_2\vert x_2)D^{(3)}_{{\lambda_3}^{\prime}}(a_3\vert  x_3)\,q(\lambda_1,{\lambda_2}^{\prime},{\lambda_3}^{\prime}\vert x_1),
\end{align}
\end{widetext}
where, in the last equality, we have introduced the normalised conditional probability distribution $q(\lambda_1,{\lambda_2}^{\prime},{\lambda_3}^{\prime}\vert x_1)\coloneqq\sum_{\lambda_2:\gamma_2(\lambda_2,x_1)={\lambda_2}',\lambda_3:\gamma_3(\lambda_3,x_1)={\lambda_3}'}p(\lambda_1,\lambda_2,\lambda_3)$.

\begin{figure}[t!]
\center
\includegraphics[width=.95\columnwidth]{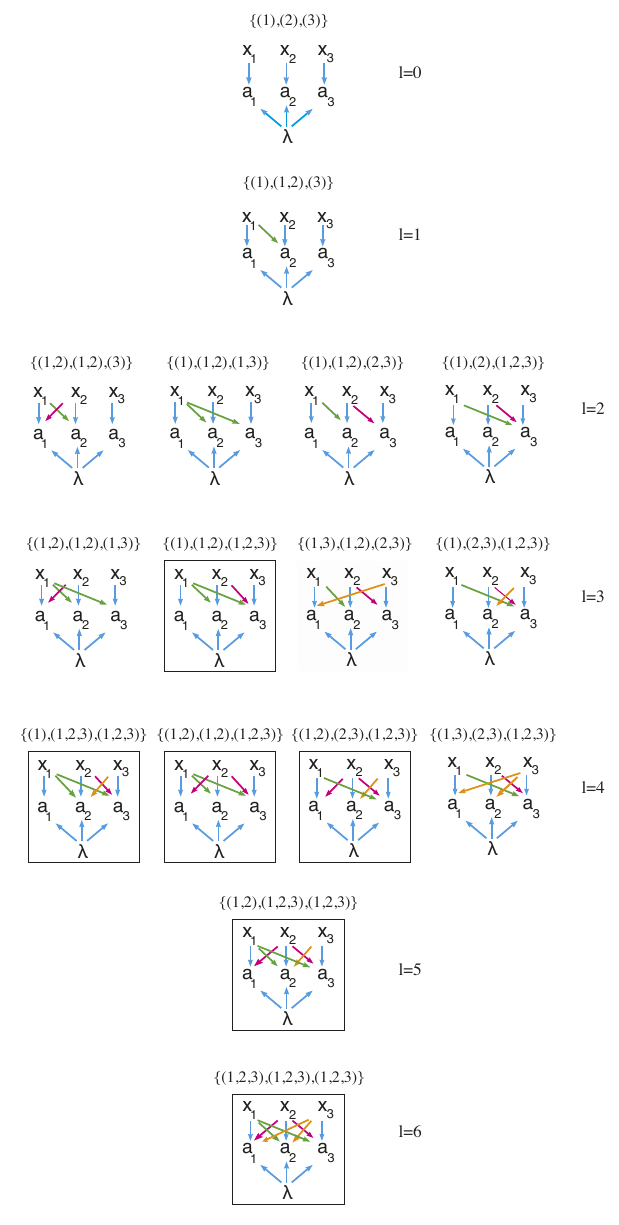}
\caption{Complete causal hierarchy for $N=3$, with 16 causal Bell classes\lo{. Each class is represented by an input-output Bell DAG (IO BDAG), labeled by a set $\left\{\boldsymbol{{\rm in}}_1,\boldsymbol{{\rm in}}_2,\boldsymbol{{\rm in}}_3 \right\}$ of vectors $\boldsymbol{{\rm in}}_i$ composed of the parent inputs of $a_i$. 
Each level of the hierarchy is defined by the total number $l$ of input-to-output locality relaxations. For example,
$\left\{(1),(2),(1,2,3)\right\}$, in level $l=2$, represents the star structure, where one output is causally influenced by all inputs.} The IO BDAGs inside the black boxes represent nonsignaling boring classes, i.e., that they can reproduce all nonsignalling correlations \lo{(see App.} \ref{sec:NS_interesting}\lo{)}.}
\label{fig:belleq}
\end{figure}

The right-hand side of the last line of Eq.~\eqref{eq:set_broad2} is readily identified as the decomposition into deterministic strategies of the most general correlation produced by a Bayesian network with respect to $\mathcal{G}^{(\lo{2})}_{\rm mir}$. This, in turn, is actually also equivalent to the most general correlation produced by a Bayesian network with respect to $\mathcal{G}^{(\lo{1})}_{\rm mir}$. This can be immediately seen by noting that, instead of $q(\lambda_1,{\lambda_2}^{\prime},{\lambda_3}^{\prime}\vert x_1)$ in Eq.~\eqref{eq:set_broad2}, $\mathcal{G}^{(\lo{1})}_{\rm mir}$ would naturally give $\frac{p(x_1\vert \lambda_1,{\lambda_2}^{\prime},{\lambda_3}^{\prime})\, p(\lambda_1,{\lambda_2}^{\prime},{\lambda_3}^{\prime})}{p(x_1)}$ . However, the two expressions are trivially equal due to Bayes' theorem.
\end{proof}

\lo{L}emmas \lo{\ref{lemma:input-to-output} and \ref{lemma:MD}} imply that every causal relaxation on a LHV model is, as for what nonsignaling correlations concerns, accounted for (in the inclusion sense) by an input-to-output locality relaxation. We refer to any BDAG whose only causal relaxations consist of input-to-output locality relaxations as an \emph{input-output (IO) BDAG}.
Every IO BDAG can be defined by the subsets of inputs that are parents of each ouput (see Fig. \ref{fig:hierarchy} for more details). We emphasise that, as a consequence of the lemmas, the total number of BDAGs to scrutinise is hugely reduced. Namely, there are 15 different ways of drawing directed edges from one party to another [all the particular instances of the general locality relaxation of Fig. \ref{fig:equiv} a)]. All corresponding 15 BDAGS are grouped together with a single IO BDAG due to lemma \ref{lemma:input-to-output}. Furthermore, each BDAG with directed edges from $\lambda$ to any of the inputs is grouped together with an IO BDAG due to lemma \ref{lemma:MD}.
Hence, IO BDAGs make generic representatives of all possible causal relaxations in the nonsignaling framework.
This leads us to the following natural classification.
\begin{definition}[Causal classes of Bell correlations]
\label{def:Bell_classes}
Each IO BDAG $\mathcal{G}_{\rm in-out}$ defines a \emph{causal class of Bell correlations}, or, for short, a \emph{causal Bell class}, as the convex hull of nonsignaling correlations produced by Bayesian networks with respect to $\mathcal{G}_{\rm in-out}$ or any of its party-permutation equivalents.
In addition, we call a causal Bell class \emph{nonsignaling interesting} if there exist nonsignaling correlations incompatible with it; otherwise we call it \emph{nonsignaling boring} \footnote{The terminology nonsignalling interesting/boring is inspired by the works \cite{HLP14,Pienaar16}. There, the terms interesting/boring are used to denote DAGs that allow/do not allow one to distinguish -- though in a slightly different sense as here -- between classical, quantum, and post-quantum models}.
Finally, each nonsignaling interesting causal Bell class defines a \emph{class of multipartite Bell nonlocality}, as the set of all nonsignaling correlations outside the causal Bell class.
\end{definition}

\begin{figure}[t!]
\vspace{0.6cm}
\center
\includegraphics[width=.9\columnwidth]{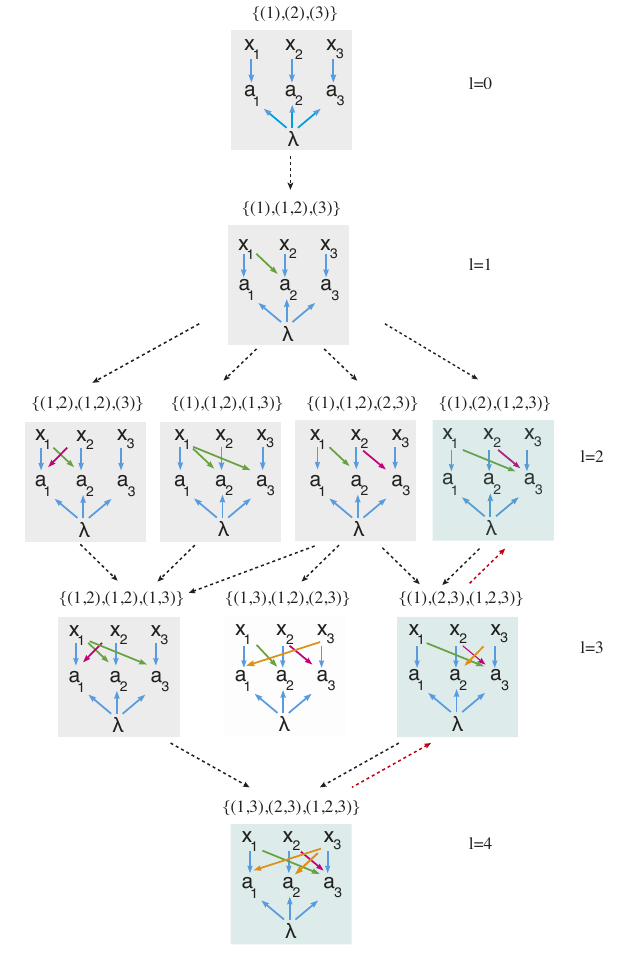}
\caption{ Hierarchy of nonsignaling interesting causal classes of Bell correlations for $N=3$ \lo{(corresponding to the IO BDAGs not inside black boxes in Fig. \ref{fig:belleq}).}
Every dashed arrow from one DAG \lo{(in a given level)} to another \lo{DAG (in a different level)} indicates that
the latter nonsignaling implies the former.
Black dashed arrows represent the implications that the hierarchy automatically imposes, whereas red ones implications that we prove by other means (see text and appendix).
The 6 light-grey shaded classes were known not to reproduce all quantum correlations \cite{Jones2005}.
From the remaining four classes, the three light-green shaded ones collapse to the star class \lo{$\left\{(1),(2),(1,2,3)\right\}$} (see red arrows). We find quantum correlations beyond this class.}
\label{fig:hierarchy}
\end{figure}

This characterisation offers, as mentioned, a physically meaningful tree-like hierarchy, where classes in a given level are nonsignalling implied by classes in the level below. We refer to it as the \emph{causal hierarchy},  represented in \figref{fig:hierarchy} for the $N=3$. We note that only the nonsignalling interesting part of the hierarchy is plotted in the figure. The complete tripartite causal hierarchy \lo{is graphically represented in Fig. \ref{fig:belleq}. It} contains a total of 16 causal Bell classes, but 6 of them are nonsignalling boring (see also the appendix for a list of all the classes in the complete hierarchy for $N=4$). Furthermore, apart from the above-mentioned implications that follow automatically from the hierarchy, other nonsignalling implications can take place. For $N=3$, for instance, 3 of the 10 nonsignalling interesting classes turn out to be equivalent, thus
collapsing to a single class.
All this is formalised by the following theorem, proven in the appendix.
\begin{theorem}[Classes of tripartite nonlocality]
\label{theo:tripartite_classes}
For $N=3$, there are 10 nonsignalling interesting causal Bell classes, represented by the IO BDAGs of \figref{fig:hierarchy}. 8 out of these 10 are inequivalent, each one yielding a different class of Bell nonlocality. Furthermore, at least 7 different nonlocality classes contain quantum correlations.
\end{theorem}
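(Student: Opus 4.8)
The plan is to split the statement into four tasks: enumerate and count the causal Bell classes, separate the nonsignalling-interesting ones from the boring ones, establish the non-automatic collapses that reduce the count of inequivalent interesting classes, and finally exhibit the quantum violations. For the enumeration I would encode each IO BDAG as a loopless directed graph on three vertices (the parties), drawing an arrow $i\to j$ precisely when $x_i\in{\rm pa}(a_j)$ with $i\neq j$; the mandatory edges $x_i\to a_i$ and $\lambda\to a_i$ are common to every BDAG and carry no information. Party-permutation equivalence is then exactly digraph isomorphism, and the level $l$ equals the number of arrows. A direct enumeration of loopless digraphs on three vertices up to isomorphism yields $1,1,4,4,4,1,1$ graphs for $l=0,1,\dots,6$, i.e.\ $16$ classes in total, matching Fig.~\ref{fig:belleq}.

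To carry out the interesting/boring dichotomy I would, for each candidate boring class, construct an explicit simulation: a shared $\lambda$ together with deterministic response functions respecting the parent structure of the class whose convex combinations reproduce an arbitrary nonsignalling box. Conversely, to certify that a class is nonsignalling interesting I would exhibit a linear (Bell-type) functional that is bounded on every correlation of the class yet violated by some nonsignalling correlation (a PR-box-type distribution suffices in each case). Performing this class by class singles out the $6$ boring classes (the boxed DAGs of Fig.~\ref{fig:belleq}) and leaves the $10$ interesting classes of Fig.~\ref{fig:hierarchy}. The systematic way to run both the membership tests and the search for separating functionals is linear programming over the relevant finite-dimensional polytopes.

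Next, the automatic inclusions already supply the black arrows of Fig.~\ref{fig:hierarchy}; what remains is to prove the three additional (red) equivalences that collapse three interesting classes onto the star class $\{(1),(2),(1,2,3)\}$. Each such equivalence I would prove exactly in the spirit of Lemmas~\ref{lemma:input-to-output} and \ref{lemma:MD}: starting from the most general correlation compatible with the more connected DAG via the Markov factorisation \eqref{markov}, I would repeatedly apply Bayes' rule and then invoke the nonsignalling constraints \eqref{eq:no_signalling} to eliminate the surplus input dependencies, arriving at the canonical decomposition of a star-class correlation. This shows the two sets of nonsignalling correlations coincide and reduces the $10$ interesting classes to $8$ inequivalent ones. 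Pairwise inequivalence of these $8$ then follows from the separating functionals already found above: for each covering pair of the hierarchy, a correlation lying in the lower (more relaxed) class but violating the upper class's inequality witnesses strict inclusion, so the $8$ causal Bell classes yield $8$ distinct nonlocality classes.

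Finally, for $7$ of the $8$ inequivalent classes I would produce an explicit quantum realisation---a state and local measurements---whose correlations violate that class's defining inequality and hence lie in the corresponding nonlocality class; the genuinely new ingredient is a quantum violation for the class previously believed to possibly admit none. The eighth class is the star/GYNI class, for which no quantum violation is claimed: I would instead record that its correlations win GYNI deterministically and, separately, exhibit a nontrivial Bell-type inequality with no quantum violation for it. The main obstacle is twofold. First, the boring/interesting dichotomy and the inequivalence certificates require controlling membership in high-dimensional nonsignalling polytopes, which is best handled computationally by linear programming supplemented with human-found separating inequalities. Second, and hardest, is constructing the quantum violation for the last nontrivial class in the hierarchy, which demands a carefully tailored combination of state, measurements, and Bell inequality rather than any off-the-shelf construction.
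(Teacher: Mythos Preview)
Your enumeration, the boring/interesting split, and the collapse argument are essentially the paper's plan. However, you have misidentified which class lacks a quantum violation, and this derails the final part of the proof. The class for which no quantum violation is known is the \emph{circle} $\{(1,3),(1,2),(2,3)\}$, not the star $\{(1),(2),(1,2,3)\}$. It is the circle whose strategies win GYNI with probability one and which satisfies inequality~\eqref{new_GYNI}. The star class, by contrast, \emph{does} admit quantum violations---this is precisely the paper's headline result---and your sentence ``the eighth class is the star/GYNI class, for which no quantum violation is claimed'' gets this backwards. So when you write that the ``genuinely new ingredient'' is a quantum violation for one of the seven, you are planning to construct that violation for the wrong class.

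This misidentification also hides a technical point you do not mention: the star class (and its two collapsed relatives) is nonsignalling \emph{boring} in the binary-input binary-output scenario, so no separating functional exists there. The paper therefore moves to larger output alphabets and builds the inequality $I_3$ of Eq.~\eqref{sum_generic} from three copies of a bipartite Bell expression; the quantum violation then requires a bipartite $I_2$ whose quantum maximum reaches the nonsignalling bound (e.g.\ the Peres--Mermin inequality or the chained inequality), so that $I_3$ can exceed $\beta_{\rm L}+2\beta_{\rm NS}$. Your generic plan of ``separating functionals via linear programming'' would not discover this, because in the smallest scenario the LP would report the star as boring. Similarly, the pairwise-inequivalence of the eight classes in the paper is established by computing critical white-noise thresholds against the $46$ tripartite NS extremal points (and one ad hoc convex combination for the pair $\{(1,2),(1,2),(1,3)\}$ vs.\ $\{(1),(1,2),(2,3)\}$), which is more than what the covering inequalities of the hierarchy give you.
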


\section{Stronger forms of quantum nonlocality}
 The 6 IO BDAGs shaded in light grey in Fig. \ref{fig:hierarchy} belong to a class of models that were shown \cite{Jones2005} to satisfy Svetlichny's inequality, which can be violated by quantum correlations \cite{Svetlichny1987}. The remaining 4 IO BDAGs define causal Bell classes for which no quantum violation was known so far. The 3 of them shaded in light green in Fig. \ref{fig:hierarchy} are nonsignalling equivalent, as mentioned, so that the fourth level, and part of the third one, of the hierarchy collapse to the second level \footnote{In fact, this is a generic property that holds for arbitrary $N$: all causal Bell classes for which all $N$ inputs go to the output of one party while any other locality relaxation involves the latter party are nonsignalling equivalent (see appendix)}.
 We refer to the resulting class as the \emph{star}, because one party (the centre of the star) receives the inputs of all others parties (the rays). Remarkably, we find quantum correlations outside of it.

In the appendix, we show that the star class is nonsignalling boring for the specific scenario of 3 parties with 2 inputs and 2 outputs each.
However, it satisfies a broad family of  Bell inequalities that are non-trivial for higher output alphabets. Consider output alphabets where each output can be factorised into two integer variables. Then, the inequalities in question can be expressed in a unified fashion as
\begin{equation}
\label{sum_generic}
I_{3}\coloneqq I_{2}(A,B)+I_{2}(A^{\prime},C)+I_{2}(B^{\prime},C^{\prime}) \leq \beta_{\rm L}+2\,\beta_{\rm NS},
\end{equation}
where $I_{2}$ stands for an arbitrary bipartite Bell expression with LHV and nonsignalling bounds $\beta_{\rm L}$ and  $\beta_{\rm NS}$, respectively. $A$ and $A^{\prime}$ are the two variables that encode the output of the first party, $B$ and $B^{\prime}$ the output of the second one, and $C$ and $C^{\prime}$ that of the third one. For example, for 2 inputs and 4 outputs per part, $I_2$ can be taken as the Clauser-Shimony-Holt inequality \cite{Clauser1969}, with $\beta_{\rm L}=2$ and $\beta_{\rm NS}=4$ (\lo{note that} this inequality has also been discussed, in a different context, in \lo{Ref.} \cite{Vertesi2012}). Then, $A$ and $A^{\prime}$ are bits, generated, for instance, by making the same measurement on two independent subsystems, and equivalently for $B$, $B^{\prime}$, $C$ and $C^{\prime}$. In that case, the resulting tripartite inequality $I_3$
can be maximally violated by three bipartite boxes independently distributed among the parties. More precisely, we refer to the well-known PR boxes \cite{Popescu1994}, which are post-quantum but nonsignalling.

Surprisingly, for higher dimensions, there exist choices of $I_2$ for which $I_3$ is maximally violated in quantum mechanics. Consider for instance $I_2$ as the all-versus-nothing Bell inequality studied in Refs. \lo{\cite{GMS07,Yang05,Aolita11}} (containing 3 inputs and 4 outputs per party). This  bipartite inequality \lo{is tight \cite{GMS07} and} can be violated up to its algebraic \lo{maximum} by correlations obtained from a pair of singlets \lo{ \cite{GMS07,Yang05,Aolita11}}. Hence, $I_3$ can be maximally violated with 6 singlets (2 singlets per pair of parties, same measurement on both pairs of qubits of each party). \lo{Since here each party holds two subsystems (\eg $A$ and $A'$), in our case each party will have 16 outcomes.} Another alternative is to choose $I_2$ as the chained inequality, which requires a single singlet per pair but a very large number of inputs per party \cite{BKP06} (see the appendix). Importantly, for both choices, experiments with violations of $I_2$ high enough to violate $I_3$ have been demonstrated  \cite{Yang05,Aolita11,Stuart12,Christensen15}.

\section{A non-trivial Bell-type inequality without quantum violation}
The last nonsignalling interesting class to analyse is $\{(1,3),(1,2),(2,3)\}$ in the third level, where each party communicates his/her input to his/her nearest neighbour in a circle-like configuration. We refer to this class as the \emph{circle}. Interestingly, correlations in this class attain a unit success probability of winning the GYNI nonlocal game of Ref. \cite{Almeida2010}.
We have not found quantum violations of this class; but we found that it satisfies the binary-input-binary-output inequality:
\begin{align}
\label{new_GYNI}
\nonumber
\mean{A_0\,B_0\,C_0}+\mean{A_0\,B_0\,C_1}+\mean{A_0\,B_1\,C_0}+\mean{A_0\,B_1C_1} &+\\
\mean{A_1\,B_0\,C_0}+\mean{A_1\,B_0\,C_1}+\mean{A_1\,B_1\,C_0}-\mean{A_1\,B_1\,C_1} &\leq 6.
\end{align}
This is maximally violated by the nonsignaling extremal correlations $p(a_1,a_2,a_3 \vert x_1,x_2,x_3)= \delta_{a_1\oplus a_2 \oplus a_3,x_1\times x_2\times x_3}/4$, where $\delta$ denotes the Kronecker delta and $ \oplus$ addition modulo 2. Using the techniques of Refs. \cite{NPA1,NPA2}, we see that Eq. \eqref{new_GYNI} is not violated by any quantum correlations. Thus, Eq. \eqref{new_GYNI} constitutes a non-trivial Bell-type inequality with no quantum violation. Note, nevertheless, that this does not imply that the circle class contains all quantum correlations, as there might be other inequalities (involving not only full-correlators) that admit quantum violations. See the appendix for details.

\section{Discussion}
We proposed a hierarchical classification for all the relaxations of locality and measurement independence in Bell's theorem in terms of the nonsignalling correlations to which they lead. The nonsignalling correlations compatible with an arbitrary causal structure are always captured by a (typically much simpler) causal network involving only locality relaxations where the input of one party causally influences the outputs of others. The framework facilitates the study of unexplored forms of multipartite Bell nonlocality. For instance, we identified new tripartite causal structures that cannot reproduce all quantum correlations. This demonstrates the strongest form of quantum multipartite nonlocality known, closing a long-standing open question \cite{Jones2005}. Furthermore, as another application, we derived a previously unknown non-trivial Bell-type inequality without a quantum violation.

Our work offers a number of exciting questions for future research. In particular, the discovery of  new and stronger forms of quantum correlations offers a vast, unexplored territory.
In addition, from a fundamental perspective, the fact that our framework naturally leads to a non-trivial Bell-type inequality with no quantum violation is appealing \cite{Winter2010}. From an applied perspective, our results may have implications in communication complexity problems \cite{Buhrman2010} or in the emerging field of quantum causal networks \cite{Leifer2013,Henson2014,Fritz2014,Chaves2015a,Piennar2015,Ried15,Costa2016}.
In conclusion, we believe our findings can open a new chapter in the understanding of multipartite Bell nonlocality.

\begin{acknowledgments}
We would like to specially thank Fernando de Melo for the hospitality at Rio de Janeiro's CBPF, where the first ideas that led to this paper were conceived, as well as S. Pironio and C. Branciard for useful comments. RC acknowledges financial support from the Brazilian ministries MEC and MCTIC, the FQXi Fund, the Excellence Initiative of the German Federal and State Governments (Grants ZUK 43 \& 81), the US Army Research Office under contracts W911NF-14-1-0098 and W911NF-14-1-0133 (Quantum Characterization, Verification, and Validation), the DFG (GRO 4334 \& SPP 1798). DC acknowledges the Ram\'on y Cajal fellowship, the Spanish MINECO (Severo Ochoa grant SEV-2015-0522), and the AXA Chair in Quantum Information Science. LA acknowledges financial support from the Brazilian agencies CNPq, FAPERJ, CAPES, and INCT-IQ.
\end{acknowledgments}

%

\begin{widetext}

\appendix


\section{Characterization of the causal polytope of each IO BDAG}
\label{sec:caus_poly}
We denote the number of inputs and outputs of the $i$-th party by $|X_i|$ and $|A_i|$, respectively, for $i=1, \hdots N$. Then, every correlations $\boldsymbol{p}$ can be thought of as a vector, with components $\boldsymbol{p}_{a_1,\dots,a_N, x_1,\dots,x_N}\coloneqq p(a_1,\dots,a_N \vert x_1,\dots,x_N)$, in the real vector space of dimension $d_{\boldsymbol{A}\vert\boldsymbol{X}}=|A_1|\times \hdots |A_N|\times |X_1|\times \hdots |X_N|$, i.e. $\RR^{d_{\boldsymbol{A}\vert\boldsymbol{X}}}$. If $\boldsymbol{p}$ is compatible with an arbitrary IO BDAG $\{\boldsymbol{{\rm in}}_1, \hdots , \boldsymbol{{\rm in}}_N\}$, where $|\boldsymbol{{\rm in}}_i|$ denotes the number of parent inputs of the $i$-th output, it can, without loss of generality, be decomposed as
\begin{equation}
\label{p_gen}
p(a_1,\dots,a_N \vert x_1,\dots,x_N)= \sum_{\lambda} D^{(1)}_{\lambda_1}(a_1 \vert  \boldsymbol{{\rm in}}_1)\hdots D^{(N)}_{\lambda_N}(a_N\vert\boldsymbol{{\rm in}}_N)\,q(\lambda),
\end{equation}
with $q(\lambda)\geq0$ and $\sum_{\lambda}q(\lambda)=1$. Here,
in a similar fashion to in the previous section, the multi-variable decomposition $\lambda=\lambda_1,\hdots,\lambda_N$ is used and $D^{(i)}_{\lambda_i}$ stands for the local deterministic response function of the $i$-th output $a_i$ given the vector $\boldsymbol{{\rm in}}_i$ of its parent inputs for the $\lambda_i$-th local deterministic strategy,  for $i=1, \hdots,N$. Each $D^{(i)}_{\lambda_i}$ is explicitly given by
\begin{equation}
\label{eq:local_det_st}
D^{(i)}_{\lambda_i}(a_i \vert  \boldsymbol{{\rm in}}_i)\coloneqq \delta_{a_i,f^{(i)}_{\lambda_i}(\boldsymbol{{\rm in}}_i)},
\end{equation}
with $\delta$ denoting the Kronecker delta and $f^{(i)}_{\lambda_i}$ being the $\lambda_i$-th local deterministic assignment of $\boldsymbol{{\rm in}}_i$ into $a_i$. In addition, the $\boldsymbol{\lambda}$-th global deterministic response function is given by the product
\begin{equation}
\label{eq:global_det_st}
\boldsymbol{D}_{\lambda}\coloneqq D^{(1)}_{\lambda_1}\times \hdots D^{(N)}_{\lambda_N}.
\end{equation}
Each $\boldsymbol{D}_{\lambda}$ is clearly also a vector in $\RR^{d_{\boldsymbol{A}\vert\boldsymbol{X}}}$, with components ${\boldsymbol{D}_{\lambda}}_{a_1,\dots,a_N, x_1,\dots,x_N}\coloneqq\boldsymbol{D}_{\lambda}(a_1,\dots,a_N \vert x_1,\dots,x_N)= D^{(1)}_{\lambda_1}(a_1 \vert  \boldsymbol{{\rm in}}_1)\times \hdots D^{(N)}_{\lambda_N}(a_N \vert  \boldsymbol{{\rm in}}_N)$.
Thus, $\boldsymbol{p}$ actually lives in a polytope in $\RR^{d_{\boldsymbol{A}\vert\boldsymbol{X}}}$ defined by the convex hull of a finite number $|\Lambda|$ of extremal points, each extremal point given by the vector $\boldsymbol{D}_{\lambda}$ for a different $\lambda$.
There are $|\Lambda_i|\coloneqq |A_i|^{|X_i|^{|\boldsymbol{{\rm in}}_N|}}$ different local deterministic strategies for each $i=1, \hdots,N$. So the total number of different global deterministic strategies is $|\Lambda|\coloneqq\prod_{i=1}^N |\Lambda_i|=\prod_{i=1}^N |A_i|^{|X_i|^{|\boldsymbol{{\rm in}}_i|}}$.

It is convenient to identify each extremal vector $\boldsymbol{D}_{\lambda}$ with the $\lambda$-th column of a $d_{\boldsymbol{A}\vert\boldsymbol{X}}\times|\Lambda|$ real matrix $\boldsymbol{D}$, with components $\boldsymbol{D}_{a_1,\dots,a_N, x_1,\dots,x_N,\lambda}\coloneqq\boldsymbol{D}_{\lambda}(a_1,\dots,a_N \vert x_1,\dots,x_N)$, and each $q(\lambda)$ as the $\lambda$-th component of a $|\Lambda|$-dimensional real vector $\boldsymbol{q}$. With this, Eq. \eqref{p_gen} can be rewritten concisely as
\begin{equation}
\label{p_concise}
\boldsymbol{p}=\boldsymbol{D}\cdot\boldsymbol{q},
\end{equation}
where  the symbol $\cdot$  stands for  contraction over the index $\lambda$. The tensor $\boldsymbol{D}$ of deterministic strategies depends exclusively  on the number of inputs and outputs per party, as well as on the the causal structure in question. It characterises completely the polytope of all correlations (both signalling and nonsignalling) compatible with the IO BDAG $\{\boldsymbol{{\rm in}}_1, \hdots , \boldsymbol{{\rm in}}_N\}$. We call such polytope the \emph{causal polytope} of $\{\boldsymbol{{\rm in}}_1, \hdots , \boldsymbol{{\rm in}}_N\}$. In contrast, the vector $\boldsymbol{q}$ is in one to one correspondence with the particular $\boldsymbol{p}$.

Hence, the problem of determining whether a given $\boldsymbol{p}$ is compatible with a Bayesian network with respect to $\{\boldsymbol{{\rm in}}_1, \hdots , \boldsymbol{{\rm in}}_N\}$ is equivalent to determining whether there exists $\boldsymbol{q}$ such that Eq. \eqref{p_concise} holds. This, since Eq. \eqref{p_concise} defines a system of linear equations, can always be solved efficiently in the length $|\Lambda|$ of the vector $\boldsymbol{q}$. A practical tool to do this is linear programming. More precisely, solving  the linear system given by Eq. \eqref{p_concise} is equivalent to solving the linear programme
\begin{eqnarray}
\label{LP}
\nonumber
&& \ \ \ \ \  \text{Given } \boldsymbol{D} \text{ and } \boldsymbol{p},  \\
\nonumber
&& \underset{\boldsymbol{q} \in \RR^{|\Lambda|},\, \boldsymbol{q}\geq0,\, \|\boldsymbol{q}\|=1}{\minimize} \, \boldsymbol{I}\cdot \boldsymbol{q},\\
&&\ \ \ \ \  \st  \boldsymbol{D}\cdot\boldsymbol{q}=\boldsymbol{p},
\end{eqnarray}
with $\boldsymbol{q}\geq0$ and $\|\boldsymbol{q}\|=1$ short-hand notations for $q(\lambda)\geq0$, for all $\lambda=1, \hdots |\Lambda|$, and $\sum_{\lambda}q(\lambda)=1$, respectively, and where  $\boldsymbol{I}$ is any vector in $\RR^{|\Lambda|}$ (that encodes the so-called objective function). If the linear programme \eqref{LP} is feasible, $\boldsymbol{p}$ is compatible with  $\{\boldsymbol{{\rm in}}_1, \hdots , \boldsymbol{{\rm in}}_N\}$. Otherwise $\boldsymbol{p}$ is not inside the causal polytope of $\{\boldsymbol{{\rm in}}_1, \hdots , \boldsymbol{{\rm in}}_N\}$. In turn, using standard convex-optimization tools, such as for instance the software PORTA \cite{porta}, one can also find the dual description of the polytope in terms not of its extremal points but of its facets, i.e., its Bell inequalities.

Finally, we recall that the causal Bell class associated to $\{\boldsymbol{{\rm in}}_1, \hdots , \boldsymbol{{\rm in}}_N\}$ is actually defined (see Def. 3 in the main text) not by all the correlations compatible with it but by the convex hull of all nonsignalling correlations compatible with any IO BDAG obtained via party exchanges from it. In that case, one proceeds in a similar way but taking into account all the different global deterministic-strategy tensors arising from party-permutations of $\{\boldsymbol{{\rm in}}_1, \hdots , \boldsymbol{{\rm in}}_N\}$ and adding to the constraints of the linear programme the nonsignaling constraints on $\boldsymbol{p}$, given by Eq. (1) in the main text.

\section{Proof of Theorem 4}
\label{Proof_tripartite}

Our proof of Theorem 4 in the main text consists of 4 steps:

\smallskip
\textbf{1.} Listing all the different tripartite IO BDAGs (without considering party permutations of each of them). For $N=3$ there are 16 such.

\smallskip
\textbf{2.} Proving that 6 of them define nonsignalling boring causal Bell classes.

\smallskip
\textbf{3.} Proving that the 10 remaining classes are all nonsignalling interesting and that at least 7 of them cannot reproduce all quantum correlations.

\smallskip
\textbf{4.} Proving that, out of the 10 nonsignalling interesting classes, the number of inequivalent ones is 8.

\smallskip
In the following 4 subsections we describe each step in detail.

\subsection{The complete causal hierarchy (i.e., including nonsignaling boring classes)}
For arbitrary $N$, the complete causal hierarchy contains a total of $N\,(N-1)$ levels, including the LHV zeroth level.
The $[N\,(N-1)]$-th level is occupied by the IO BDAG $\{(1, \hdots, N), \hdots, (1, \hdots, N)\}$, for which all $N$ inputs are parents of all $N$ outputs. The latter causal structure can reproduce all correlations (including the signalling ones), so that the corresponding causal Bell class is trivially nonsignalling boring. For each fixed number $l$ of input-output locality relaxations, there are a total of $\binom{N\,(N-1)}{l}$ different IO BADGs, but many of them are redundant, as they are equivalent up to party exchanges. Eliminating all the party-exchange redundancies leaves us with the IO BDAGs that define the complete causal hierarchy.
Following the procedure above for $N=3$ yields 16 different non-redundant IO BDAGs, graphically represented in \figref{fig:belleq}. In addition, in table \ref{table_3}, we summarise the main properties of each of these IO BDAGs. (See also Sec. \ref{sec:Four-partite} for a brief description of the complete causal hierarchy for the four-partite case.)
\begin{table}[t]
  \begin{tabular}{|c|c|c|c|c|}
    \hline
Level & Representative IO BDAG & Party-exchange symmetries & Quantum violation & Nonsignaling violation \\ \hline
0	 & 	\{(1),(2),(3)\}	 & 	1	 & 	Yes	& 	Yes		 	 	 \\ \hline
1	 & 	\{(1),(1,2),(3)\}	 & 	6	 & 	Yes	& 	Yes		  	 	 \\ \hline

2	 & 	\{(1,2),(1,2),(3)\}	 & 	3	 & 	Yes	& 	Yes		 	 	 \\
	 & 	\{(1),(1,2),(1,3)\}	 & 	6	 & 	Yes	& 	Yes		 	 	 \\
	 & 	\{(1),(1,2),(2,3)\}	 & 	3	 & 	Yes	& 	Yes		 	 	 \\
	 & 	\{(1),(2),(1,2,3)\}	 & 	3	 & 	Yes & 	Yes			 	 	 \\ \hline

3	 & 	\{(1,2),(1,2),(1,3)\}	 & 	6	 & 	Yes & 	Yes			 	 	 \\
	 & 	\{(1),(1,2),(1,2,3)\}	 & 	6	 &  No &  No			 	 	 \\
	 & 	\{(1),(2,3),(1,2,3)\}	 & 	6	 & 	Yes & 	Yes		 	 	 \\
	 & 	\{(1,3),(1,2),(2,3)\}	 & 	2	 & 	Unknown & 	Yes			 	 	 \\ \hline

4	 & 	\{(1),(1,2,3),(1,2,3)\}	 & 	3	 &  No &  No			 	 	 \\
	 & 	\{(1,2),(1,2),(1,2,3)\}	 & 	3	 &  No &  No			 	 	 \\
	 & 	\{(1,2),(2,3),(1,2,3)\}	 & 	6	 &  No &  No			 	 	 \\
	 & 	\{(1,3),(2,3),(1,2,3)\} & 	3	 & 	Yes & 	Yes			 	 	 \\ \hline

5	 & 	\{(1,2),(1,2,3),(1,2,3)\}	 & 	6	 & No	&  No			 	 	 \\ \hline
6	 & 	\{(1,2,3),(1,2,3),(1,2,3)\}	 & 	1	 & 	No &  No			 	 	 \\ \hline
      \end{tabular}
\caption{Main properties of the 16 causal Bell classes that compose the complete  causal hierarchy for $N=3$. The first column indicates the level of the hierarchy. In the second column, the representative IO BDAGs of each class (those protted in Fig. \ref{fig:belleq}) are shown. In the third column, the total numbers of party-permutation symmetries of each IO BDAG are shown. The fourth and fifth column indicate which causal Bell classes (which take into account all the corresponding party-exchange symmetries, according to Def. 3 in the main text) are violated by quantum and nonsignalling correlations, as shown in Secs. \ref{sec:7_quantum}.
}
\label{table_3}
\end{table}

\subsection{Five nonsignalling boring classes in the tripartite scenario}
\label{sec:NS_interesting}
In this subsection, we prove that the 5 causal Bell classes represented by the IO BDAGs in black boxes in Fig. \ref{fig:belleq} are nonsignalling boring, i.e., they can reproduce all nonsignalling correlations. We do this by explicitly proving that the class $\{(1),(1,2),(1,2,3)\}$ in the third level is nonsignalling boring. This automatically implies that the other 4 classes (3 in the fourth level and the one in the fifth level) are nonsignalling boring too, as they can all be obtained from $\{(1),(1,2),(1,2,3)\}$ by causal relaxations.

\begin{proof}[Proof that $\{(1),(1,2),(1,2,3)\}$ is nonsignalling boring]

Consider arbitrary tripartite correlations $\boldsymbol{p}$ with elements $p(a_1,a_2,a_3 \vert x_1,x_2,x_3)$. Without loss of generality, these can be decomposed as
\begin{eqnarray}
\label{eq:subsec2_1}
p(a_1,a_2,a_3 \vert x_1,x_2,x_3)&=&p(a_3 \vert x_1,x_2,x_3,a_1,a_2)\,p(a_2\vert x_1,x_2,x_3,a_1)\,p(a_1\vert x_1,x_2,x_3)\\
\label{eq:subsec2_2}
&=&p(a_3 \vert x_1,x_2,x_3,a_1,a_2)\,p(a_2\vert x_1,x_2,a_1)\,p(a_1\vert x_1),
\end{eqnarray}
where Eq. \eqref{eq:subsec2_1} follows from Bayes' rule and Eq. \eqref{eq:subsec2_2} from the nonsignalling constraints, given by Eq. (1) in the main text. Now, from Eq. (2) in the main text, we see that $p(a_3 \vert x_1,x_2,x_3,a_1,a_2)\,p(a_2\vert x_1,x_2,a_1)\,p(a_1\vert x_1)$ is the explicit expression of  correlations produced by a Bayesian network with respect to a BDAG with locality relaxations from $A$ to $B$, $A$ to $C$, and $B$ to $C$. By virtue of Lema 1 in the main text, such correlations are always within the causal Bell class $\{(1),(1,2),(1,2,3)\}$.
\end{proof}

\subsection{10 nonsignalling interesting tripartite classes, at least 7 of which with quantum violations}
\label{sec:7_quantum}

In this subsection, we prove that the remaining 10 classes are nonsignalling interesting. We do that by deriving Bell inequalities for each causal Bell class that are violated by nonsignalling correlations. Furthermore, for 7 of the classes, the violations that we find are not only nonsignalling but actually quantum.

The classes represented by $\{(1),(2),(3)\}$, $\{(1),(1,2),(3)\}$, $\{(1,2),(1,2),(3)\}$, $\{(1),(1,2),(1,3)\}$, $\{(1),(1,2),(2,3)\}$ and $\{(1,2),(1,2),(1,3)\}$ define, in the terminology of Ref. \cite{Jones2005}, \emph{partially paired} correlations. In Ref. \cite{Jones2005}, it was shown that all partially paired correlation, respect the Svetlichny inequality \cite{Svetlichny1987}
\begin{eqnarray}\label{eq: svet ineq_app}
-\mean{A_0B_0C_0}+\mean{A_0B_0C_1}+\mean{A_0B_1C_0}+\mean{A_0B_1C_1} +\mean{A_1B_0C_0}+\mean{A_1B_0C_1}+\mean{A_1B_1C_0}-\mean{A_1B_1C_1} \leq 4.
\end{eqnarray}
This inequality is
violated by quantum correlations obtained from local measurements on entangled quantum states, the maximum quantum violation being $4\sqrt{2}$, with Greenberger-Horne-Zeilinger states \cite{Svetlichny1987}. Thus, the six classes are not only nonsignalling interesting but they also admit quantum violations.

For the circle class, represented by $\{(1,3),(1,2),(2,3)\}$, we derive a previously unknown non-trivial tight Bell inequality for full correlators. The set of full correlators compatible with a given class defines also a polytope. Thus, as discussed in Sec. \eqref{sec:caus_poly}, standard convex optimization tools \cite{porta} can be used to obtain the Bell inequalities for full correlators.
Notice also that, given a Bell inequality, a simple way to see if it is satisfied by a causal Bell class is to check that all global deterministic strategies of the class (see Sec. \ref{sec:caus_poly}) respect the inequality.
We find that all full correlators compatible with the circle class satisfy  the inequality
\begin{eqnarray}
\label{new_GYNI_app}
\mean{A_0B_0C_0}+\mean{A_0B_0C_1}+\mean{A_0B_1C_0}+\mean{A_0B_1C_1} +\mean{A_1B_0C_0}+\mean{A_1B_0C_1}+\mean{A_1B_1C_0}-\mean{A_1B_1C_1} \leq 6
\end{eqnarray}
This inequality is violated up to the algebraic maximal value 8 by the nonsignaling extremal correlations
\begin{equation}
p(a_1,a_2,a_3 \vert x_1,x_2,x_3)= \frac{1}{4}\,\delta_{a_1\oplus a_2 \oplus a_3,\,x_1\times x_2\times x_3},
\end{equation}
originally identified in Ref. \cite{Barrett2005}. This proves that $\{(1,3),(1,2),(2,3)\}$ is nonsignalling interesting.
Using the techniques of Refs. \cite{NPA1,NPA2}, we see that Eq. \eqref{new_GYNI} is not violated by any quantum correlations. Thus, Eq. \eqref{new_GYNI} constitutes a novel non-trivial Bell inequality with no quantum violation. Note, nevertheless, that this does not imply that the causal Bell class $\{(1,3),(1,2),(2,3)\}$ can reproduce all quantum correlations, as there might be other inequalities (involving not only full-correlators) that admit quantum violations. \dani{Finally, we were unable to determine if this inequality is tight in the space of probability distributions defined by the class due to the complexity of determining all the extremal points of this polytope.}

The three remaining  classes, represented by $\{(1),(2),(1,2,3)\}$, $\{(1),(2,3),(1,2,3)\}$ and $\{(1,3),(2,3),(1,2,3)\}$, are
 found to be nonsignalling boring for the restricted case of binary inputs and outputs. To see this, we solved, once again with standard convex-optimization tools \cite{porta}, the feasibility problem of Eq. \eqref{p_concise} for all the 46  extremal nonsignalling correlations \cite{Pironio2011} for the binary-input binary-output case, the same ones used for table \ref{table_noise}. All 46 extremal points are found to belong to the causality polytopes of $\{(1),(2),(1,2,3)\}$, $\{(1),(2,3),(1,2,3)\}$ and $\{(1,3),(2,3),(1,2,3)\}$. This is however not true for higher numbers of inputs and outputs, as we next show analytically.

The crucial observation is to note that any (tripartite) correlations $\boldsymbol{p}$ produced by a Bayesian network with respect to any of the three IO BDAGs, $\{(1),(2),(1,2,3)\}$, $\{(1),(2,3),(1,2,3)\}$ or $\{(1,3),(2,3),(1,2,3)\}$, has marginal (bipartite) correlations over Alice and Bob, the first and second parties, respectively, described by a bipartite LHV model. The intuitive explanation for this is that in none of the three BDAGs there are arrows going from Alice to Bob or from Bob to Alice. In the end of this subsection, we prove this fact formally. This fact implies that the three corresponding causal Bell classes consist exclusively of convex combinations of (tripartite) correlations each of which has a LHV bipartite marginal over some pair out of the three parties.
From this, in turn, it follows that the three causal Bell classes satisfy a broad family of non-trivial Bell inequalities that can all be described in a unified way by the generic expression
\begin{equation}
\label{sum_generic_app}
I_{3}\coloneqq I_{2}(A,B)+I_{2}(A^{\prime},C)+I_{2}(B^{\prime},C^{\prime}) \leq \beta_{\rm L}+\beta_{\rm NS}+\beta_{\rm NS},
\end{equation}
where $I_{2}$ stands for any arbitrary bipartite linear Bell expression with local bound $\beta_{\rm L}$ and nonsignalling bound $\beta_{\rm NS}$. $A$ and $A^{\prime}$ are random variables associated to the outputs of Alice, $B$ and $B^{\prime}$ to the outputs of Bob, and $C$ and $C^{\prime}$ to those of Charlie, the third party.

For instance, the simplest non-trivial example we find is in the scenario of 2 inputs and 4 outputs per party. There, each output can, without loss of generality, be represented by two bits: $A$ and $A^{\prime}$ for Alice, $B$ and $B^{\prime}$ for Bob, and $C$ and $C^{\prime}$ for Charlie. Then, $I_{2}$ can  be chosen as the Clauser-Shimony-Holt inequality \cite{Clauser1969}: $I_{2}(A,B)=CHSH(A,B)\coloneqq\mean{A_0\,B_0}+\mean{A_0\,B_1}+\mean{A_1\,B_0}-\mean{A_1\,B_1}$, $I_{2}(A^{\prime},C)=CHSH(A^{\prime},C)\coloneqq\mean{A^{\prime}_0\,C_0}+\mean{A^{\prime}_0\,C_1}+\mean{A^{\prime}_1\,C_0}-\mean{A^{\prime}_1\,C_1}$, and $I_{2}(B^{\prime},C^{\prime})=CHSH(B^{\prime},C^{\prime})\coloneqq\mean{B^{\prime}_0\,C^{\prime}_0}+\mean{B^{\prime}_0\,C^{\prime}_1}+\mean{B^{\prime}_1\,C^{\prime}_0}-\mean{B^{\prime}_1\,C^{\prime}_1}$; and $\beta_{\rm L}=2$ and $\beta_{\rm NS}=4$. With this choice, Eq. \eqref{sum_generic_app} gives the overall Bell inequality	
\begin{equation}
\label{sum_CHSH_app}
I_{3}=CHSH(A,B)+CHSH(A^{\prime},C)+CHSH(B^{\prime},C^{\prime}) \leq 10.
\end{equation}
Importantly, this inequality can be violated with three Popescu-Rohrlich boxes $p_{\mathrm{PR}}(a,b \vert x,y)\coloneqq (1/2)\delta_{a \oplus b, xy}$  \cite{Popescu1994} distributed among the three parties, such that the overall tripartite correlations are given by $p(a_1,a_1^{\prime},a_2,a_2^{\prime},a_3,a_3^{\prime} \vert x_1,x_2,x_3)= p_{\mathrm{PR}}(a_1,a_2 \vert x_1,x_2)\,p_{\mathrm{PR}}(a_1^{\prime},a_3 \vert x_1,x_3)\,p_{\mathrm{PR}}(a_2^{\prime},a_3^{\prime} \vert x_2,x_3)$. The latter correlations yield the maximal algebraic value $12$ for the lhs of \eqref{sum_CHSH_app}. This shows that the  causal Bell classes  $\{(1),(2),(1,2,3)\}$, $\{(1),(2,3),(1,2,3)\}$ and $\{(1,3),(2,3),(1,2,3)\}$ are nonsignalling interesting.

Furthermore, a very surprising fact arises in the scenario of 3 inputs and 16  outputs (4 bits) per party. There, quantum correlations exist that are incompatible with the three causal classes. More precisely, each output can now be represented by four bits: $A_1$, $A_2$, $A^{\prime}_1$, and $A^{\prime}_2$ for Alice, $B_1$, $B_2$, $B^{\prime}_1$, and $B^{\prime}_2$ for Bob, and $C_1$, $C_2$, $C^{\prime}_1$, and $C^{\prime}_2$ for Charlie. Then, $I_{2}$ is now chosen as a bipartite Bell inequality $I_{\mathrm{PM}}$, for 3 inputs and 4 outputs per party, associated to the so-called Peres-Mermin square \cite{Yang05,Aolita11}. See Eq. (7) in Ref. \cite{Aolita11}, for instance, for an explicit expression of $I_{\mathrm{PM}}(A_1,A_2,B_1, B_2)$. In turn, for this inequality the local and maximal nonsignalling bounds are $\beta_{\rm L}=7$ and $\beta_{\rm NS}=9$, respectively. The interesting feature of $I_{\mathrm{PM}}$ for our purposes is that it can be violated by quantum correlations obtained from a maximally entangled state of two ququarts, or, equivalently, two maximally entangled states of two qubits, up to the algebraic maximal value  $\beta_{\rm NS}=9$.
Thus, with this choice, Eq. \eqref{sum_generic_app} gives the overall Bell inequality
\begin{equation}
\label{sum_PM_app}
I_{3}=I_{\mathrm{PM}}(A_1,A_2,B_1, B_2)+I_{\mathrm{PM}}(A^{\prime}_1,A^{\prime}_2,C_1, C_2)+ I_{\mathrm{PM}}(B^{\prime}_1,B^{\prime}_2,C^{\prime}_1, C^{\prime}_2) \leq 25,
\end{equation}
which can be maximally violated  up to the algebraic maximal value 27 with three maximally entangled states of two ququarts each appropriately distributed among the three parties. Another construction with equivalent implications would be to take $I_{2}$ as the celebrated chained inequality $I_{\mathrm{chain}}$, with 4 outputs (2 bits) and different numbers of inputs per party. This can also be maximally violated up to its algebraic maximal value with quantum correlations. A maximally entangled state of just two qubits (instead of ququarts) is needed for this choice, but at the expenses of requiring an infinitely large number of inputs \cite{BKP06}. Nevertheless, we note that the experimental violations obtained in Refs. \cite{Stuart12} and \cite{Christensen15} for 7 and 18 inputs, respectively, would suffice for a (non-maximal) violation of $I_3$. Either way, we conclude that the causal Bell classes $\{(1),(2),(1,2,3)\}$, $\{(1),(2,3),(1,2,3)\}$ and $\{(1,3),(2,3),(1,2,3)\}$ not only cannot reproduce all quantum correlations but they also satisfy Bell inequalities that are violated by quantum correlations up to the algebraic maximal value.

To end up with, we note that the causal Bell classes $\{(1),(2),(1,2,3)\}$, $\{(1),(2,3),(1,2,3)\}$ and $\{(1,3),(2,3),(1,2,3)\}$ correspond to
 \emph{totally paired} correlations, in the terminology of Ref. \cite{Jones2005}. We emphasise that the question about the existence of quantum correlations more non-local than totally paired models had been open since the work of  Ref. \cite{Jones2005}.

\begin{proof}[Proof that $\{(1),(2),(1,2,3)\}$, $\{(1),(2,3),(1,2,3)\}$ and $\{(1,3),(2,3),(1,2,3)\}$ have LHV marginals over Alice and Bob]
Since the set of all correlations produced by Bayesian networks with respect to $\{(1),(2),(1,2,3)\}$ and $\{(1),(2,3),(1,2,3)\}$ is a subset of the set of all correlations produced by Bayesian networks with respect to $\{(1,3),(2,3),(1,2,3)\}$, it suffices to prove the claim for the latter correlations only.

Consider then arbitrary correlations $\boldsymbol{p}$ produced by a generic Bayesian network with respect to $\{(1,3),(2,3),(1,2,3)\}$, with elements $p(a_1,a_2,a_3,\vert x_1,x_2,x_3)$. Then, the marginal correlations over Alice and Bob will have elements
\begin{eqnarray}
\nonumber
p(a_1,a_2 \vert x_1,x_2)\coloneqq & & \sum_{a_3,x_3,\lambda}\, p(a_1,a_2,a_3,x_3,\lambda \vert x_1,x_2) \\
\label{eq:Bayes_LH}
= & & \sum_{a_3,x_3,\lambda} p(a_1\vert a_2,a_3,x_1, x_2,x_3,\lambda)\,p(a_2 \vert a_3, x_1, x_2,x_3,\lambda)\, p(a_3\vert x_1,x_2,x_3,\lambda)\,p(x_3,\lambda \vert x_1,x_2)\\
\label{eq:Cond_Indep_LH}
= & & \sum_{a_3,x_3,\lambda} p(a_1 \vert x_1,x_3,\lambda)\, p(a_2 \vert x_2,x_3,\lambda)\, p(a_3 \vert x_1,x_2,x_3,\lambda)\,p(x_3,\lambda) \\
\label{eq:Sum_C_LH}
= & & \sum_{x_3,\lambda} p(a_1\vert x_1,x_3,\lambda)\, p(a_2 \vert x_2,x_3,\lambda)\,p(x_3,\lambda) \\
\label{eq:def_lambda_prime}
= & & \sum_{\lambda^{\prime}} p(a_1 \vert x_1,\lambda^{\prime})\,p(a_2 \vert x_2,\lambda^{\prime})\,p(\lambda^{\prime})
\end{eqnarray}
where Bayes rule has been used in Eq. \eqref{eq:Bayes_LH}, Eq. (2) in the main text has been used for Eq. \eqref{eq:Cond_Indep_LH}, Eq. \eqref{eq:Sum_C_LH} follows from summing over $c$, and Eq. \eqref{eq:def_lambda_prime} follows from the identification $\lambda^{\prime}\coloneqq x_3,\lambda $. The right-hand side of Eq. \eqref{eq:def_lambda_prime} manifestly define LHV bipartite correlations for Alice and Bob. Clearly, analogous arguments also hold for all party-exchange symmetries of $\{(1,3),(2,3),(1,2,3)\}$.
\end{proof}

\subsection{Proof that the number of inequivalent causal Bell classes for $N=3$ is 8}
\label{sec:8ineq}

In the end of this subsection, we prove that the causal Bell classes $\{(1),(2),(1,2,3)\}$, $\{(1),(2,3),(1,2,3)\}$ and $\{(1,3),(2,3),(1,2,3)\}$, in the second, third, and fourth levels of the hierarchy, respectively, are equivalent, i.e., all three actually collapse into a same single class, called  the star class. The resulting class is, in addition, inequivalent to the other 7 nonsignalling interesting classes, because, as we show in Sec. \ref{sec:7_quantum}, it can reproduce all nonsignalling correlations for the case of binary inputs and outputs while the other 7 classes cannot.

Now we show that the remaining 7 nonsignalling interesting classes are all inequivalent. To this end, we make extensive use of table \ref{table_noise}. There, for each one of the 46 extremal correlations \cite{Pironio2011} of the nonsignalling polytope for $N=3$ and 2 inputs and 2 outputs, we show the critical noise rates $\nu_{\mathrm{crit}}$ at which the convex combination $\boldsymbol{p}(\nu)=(1-\nu)\,\boldsymbol{p}_{\mathrm{ext}}+\nu\,\boldsymbol{p}_{\mathrm{WN}}$, with $\boldsymbol{p}_{\mathrm{ext}}$ the extremal correlations under scrutiny and $\boldsymbol{p}_{\mathrm{WN}}$ the correlations corresponding to pure white noise (i.e., the uniform distribution), enters the causality polytope of each causal Bell class. That is, $\nu_{\mathrm{crit}}$ is such that $\boldsymbol{p}(\nu)$ is compatible with the causal Bell class in question for all $\nu\geq\nu_{\mathrm{crit}}$. 

\rafael{We notice that to that aim is enough to consider only one representative of each of the 46 classes. This follows from the fact that for each representative we solve a feasibility problem (see Appendix C) that answers the question whether such probability lies inside the associated causal polytope or not. Every other extremal point within the same class (also considering the noise) can be obtained by this representative by symmetry operations (permutation of parties, inputs and outputs and combinations thereof) that can be understood as a simple relabel of variables and therefore does not change the answer to the feasibility problem (that optimizes over all such relabellings).} 

\begin{table}[t]
  \begin{tabular}{|l|l|l|l|l|l|l|l|}
    \hline
    \multirow{2}{*}{Box Number} &
      \multicolumn{1}{c|}{0th-level} &
      \multicolumn{1}{c|}{1st-level} &
      \multicolumn{3}{c|}{2nd-level} &
      \multicolumn{2}{c|}{3rd-level} \\
    & $\left\{ (1),(2),(3) \right\}$ & $\left\{ (1),(1,2),(3) \right\}$ & \{((1,2),(1,2),(3)\} & \{(1),(1,2),(1,3)\}  & \{(1),(1,2),(2,3)\}  & \{(1,2),(1,2),(1,3)\} & \{(1,3),(1,2),(2,3)\} \\ \hline
1	 & 	0	 & 	0	 & 	0	 & 	0	 & 	0	 & 	0	 & 	0	 	 	 \\ \hline
2	 & 	2/3	 & 	0	 & 	0	 & 	0	 & 	0	 & 	0	 & 	0	 	 	 \\ \hline
3	 & 	1/2	 & 	0	 & 	0	 & 	0	 & 	0	 & 	0	 & 	0	 	 	 \\ \hline
4	 & 	2/5	 & 	0	 & 	0	 & 	0	 & 	0	 & 	0	 & 	0	 	 	 \\ \hline
5	 & 	1/2	 & 	0	 & 	0	 & 	0	 & 	0	 & 	0	 & 	0	 	 	 \\ \hline
6	 & 	1/2	 & 	0	 & 	0	 & 	0	 & 	0	 & 	0	 & 	0	 	 	 \\ \hline
7	 & 	1/2	 & 	0	 & 	0	 & 	0	 & 	0	 & 	0	 & 	0	 	 	 \\ \hline
8	 & 	1/2	 & 	0	 & 	0	 & 	0	 & 	0	 & 	0	 & 	0	 	 	 \\ \hline
9	 & 	1/2	 & 	1/6	 & 	0	 & 	1/6	 & 	0	 & 	0	 & 	0	 	 	 \\ \hline
10	 & 	3/5	 & 	1/5	 & 	0	 & 	1/5	 & 	0	 & 	0	 & 	0	 	 	 \\ \hline
11	 & 	1/2	 & 	0	 & 	0	 & 	0	 & 	0	 & 	0	 & 	0	 	 	 \\ \hline
12	 & 	1/2	 & 	0	 & 	0	 & 	0	 & 	0	 & 	0	 & 	0		 	 \\ \hline
13	 & 	1/2	 & 	2/11	 & 	4/37	 & 	2/11	 & 	0	 	 & 	0	 & 	0	  \\ \hline
14	 & 	1/2	 & 	4/19	 & 	1/7	 & 	4/19	 & 	0	 	 & 	0	 & 	0		 \\ \hline
15	 & 	3/5	 & 	3/13	 & 	1/7	 & 	3/13	 & 	1/9	 	 & 	1/9	 & 	0		 \\ \hline
16	 & 	1/2	 & 	2/11	 & 	4/25	 & 	2/11	 & 	0	 	 & 	0	 	 & 	0	 \\ \hline
17	 & 	4/7	 & 	4/19	 & 	4/23	 & 	4/19	 & 	1/7	 	 & 	1/7	 	 & 	0	 \\ \hline
18	 & 	3/5	 & 	5/21	 & 	1/5	 & 	2/9	 & 	1/5	 	 & 	1/5	 & 	0	 	 \\ \hline
19	 & 	8/15	 & 	2/7	 & 	4/19	 & 	2/7	 & 	0	 	 & 	0	 & 	0	 	 \\ \hline
20	 & 	15/29	 & 	16/65	 & 	3/13	 & 	4/19	 	 & 	0	 & 	0		 & 	0	 \\ \hline
21	 & 	1/2	 & 	2/7	 & 	4/17	 & 	2/7	 & 	1/5	 	 & 	1/5	 & 	0	 	 \\ \hline
22	 & 	1/2	 & 	8/29	 & 	1/4	 & 	8/29	 & 	1/4		 & 	1/4	 & 	0		 \\ \hline
23	 & 	1/2	 & 	1/4	 & 	1/4	 & 	4/19	 & 	0	 	 & 	0	 & 	0	 	 \\ \hline
24	 & 	4/7	 & 	1/4	 & 	1/4	 & 	8/35	 & 	0	 	 & 	0	 & 	0	 	 \\ \hline
25	 & 	4/7	 & 	1/4	 & 	1/4	 & 	0	 & 	0	 	 & 	0	 & 	0	 	 \\ \hline
26	 & 	4/7	 & 	8/29	 & 	1/4	 & 	8/29	 & 	1/4	 	 & 	1/4	 & 	0	 	 \\ \hline
27	 & 	1/2	 & 	2/7	 & 	1/4	 & 	2/7	 & 	1/5		 & 	1/5	 & 	1/12	 	 \\ \hline
28	 & 	4/7	 & 	10/37	 & 	1/4	 & 	8/35	 & 	0	 & 	0	 & 	0		 \\ \hline
29	 & 	4/7	 & 	4/13	 & 	10/39	 & 	4/13	 & 	1/4		 & 	1/4	 & 	5/33	 	 \\ \hline
30	 & 	15/29	 & 	2/7	 & 	2/7	 & 	2/7	 & 	2/7	  & 	2/7	 & 	0	 	 \\ \hline
31	 & 	15/29	 & 	2/7	 & 	2/7	 & 	2/7	 & 	2/7	 	 & 	2/7	 & 	0	 	 \\ \hline
32	 & 	15/29	 & 	2/7	 & 	2/7	 & 	8/33	 & 	0	 	 & 	0	 & 	0	 	 \\ \hline
33	 & 	15/29	 & 	2/7	 & 	2/7	 & 	8/33	 & 	0	 	 & 	0	 & 	0	 	 \\ \hline
34	 & 	1/2	 & 	1/3	 & 	1/3	 & 	1/3	 & 	1/3	 	 & 	1/3	 & 	0	 	 \\ \hline
35	 & 	1/2	 & 	1/3	 & 	1/3	 & 	4/19	 & 	0		 & 	0	 & 	0	 	 \\ \hline
36	 & 	1/2	 & 	1/3	 & 	1/3	 & 	1/3	 & 	1/3	 	 & 	1/3	 & 	0	 	 \\ \hline
37	 & 	1/2	 & 	1/3	 & 	1/3	 & 	1/5	 & 	0	 	 & 	0	 & 	0	 	 \\ \hline
38	 & 	4/7	 & 	1/3	 & 	1/3	 & 	4/19	 & 	1/7	 	 & 	1/7	 & 	0	 	 \\ \hline
39	 & 	4/7	 & 	1/3	 & 	1/3	 & 	1/3	 & 	1/3	 	 & 	1/3	 & 	0	 	 \\ \hline
40	 & 	1/2	 & 	1/3	 & 	1/3	 & 	0	 & 	0	 	 & 	0	 & 	0	 	 \\ \hline
41	 & 	1/2	 & 	1/3	 & 	1/3	 & 	1/3	 & 	1/3	 	 & 	1/3	 & 	0	 	 \\ \hline
42	 & 	1/2	 & 	1/3	 & 	1/3	 & 	0	 & 	0	 	 & 	0	 & 	0	 	 \\ \hline
43	 & 	8/15	 & 	4/11	 & 	4/11	 & 	11/54	 & 	1/8	 	 & 	1/8	 & 	2/23	 	 \\ \hline
44	 & 	3/5	 & 	3/8	 & 	3/8	 & 	1/3	 & 	1/3	 	 & 	1/3	 & 	1/4	 	 \\ \hline
45	 & 	1/2	 & 	1/2	 & 	1/2	 & 	0	 & 	0	  & 	0	 & 	0	 	 \\ \hline
46	 & 	1/2	 & 	1/2	 & 	1/2	 & 	1/2	 & 	1/2		 & 	1/2	 & 	0	 	 \\ \hline
      \end{tabular}
\caption{
\label{table_noise}
Table with the critical noise rates $\nu_{\mathrm{crit}}$ at which convex combinations of white noise with one of the extremal correlations of the tripartite scenario with 2 inputs and 2 outputs give correlations inside each causal Bell class. Each row corresponds to one of the 46 nonsignalling extremal correlations of the tripartite scenario with 2 inputs and 2 outputs according to the numbering of Ref. \cite{Pironio2011}. Each column corresponds to one of the seven classes that are nonsignalling interesting in this scenario. The three IO BDAGs $\{(1),(2),(1,2,3)\}$, $\{(1),(2,3),(1,2,3)\}$ and $\{(1,3),(2,3),(1,2,3)\}$ do not appear in the table because their associated causal Bell class is nonsignalling boring for $N=3$ and 2 inputs and 2 outputs, i.e., it displays $\nu_{\mathrm{crit}}=0$ for all 46 extremal correlations.}
\end{table}

Almost all of the 7 classes display different values of $\nu_{\mathrm{crit}}$ for some causal Bell class, which allows one to distinguish them and thus conclude that they are not equivalent. For instance, the causal Bell class $\{(1,3),(1,2),(2,3)\}$ in the third level can reproduce convex combinations with the extremal correlation $41$ in Table \ref{table_noise} even for $v=0$, where for all other classes the reproduction is only possible for $v \geq 1/3$, thus proving a finite gap between $\{(1,3),(1,2),(2,3)\}$ and the remaining 6 classes. A similar argument can be applied to all pairs of classes among the 6 remaining classes (the light-grey shaded ones in Fig. 1 of the main text) except for the pair $\{(1,2),(1,2),(1,3)\}$ and  $\{(1),(1,2),(2,3)\}$.

According to Table \ref{table_noise}, the classes $\{(1,2),(1,2),(1,3)\}$, in the third level, and  $\{(1),(1,2),(2,3)\}$, in the second level, possess the same critical noise rates for all 46 extremal correlations. However, here, one can consider the convex combination $\boldsymbol{p}(\nu)=(1-\nu)\,\boldsymbol{p}_{\mathrm{ext}_1}+\nu\,\boldsymbol{p}_{\mathrm{ext}_{38}}$, where $\boldsymbol{p}_{\mathrm{ext}_1}$ and $\boldsymbol{p}_{\mathrm{ext}_{38}}$ are respectively the first and thirty-eighth extremal correlations. Solving a linear program similar to the one used to generate Table \ref{table_noise}, one sees that $\{(1),(1,2),(2,3)\}$ can reproduce the convex combination only for $\nu=1$, while  $\{(1,2),(1,2),(1,3)\}$ can do it for all $\nu \geq 1/7$. This proves the inequivalence between the two classes and thus concludes the proof.

\begin{proof}[Proof that  $\{(1),(2),(1,2,3)\} \leftrightarrow \{(1),(2,3),(1,2,3)\} \leftrightarrow \{(1,3),(2,3),(1,2,3)\}$]

The $\leftarrow$ implications follows automatically from the hierarchy. So we prove the $\rightarrow$ implications. Consider arbitrary nonsignaling correlations $\boldsymbol{p}$, with elements $p(a_1,a_2,a_3 \vert x_1,x_2,x_3)$. Then, it holds that
\begin{align}
 \label{eq:proof_collapse}
 & p(a_1,a_2,a_3 \vert x_1,x_2,x_3) =  p(a_3 \vert a_1,a_2, x_1,x_2,x_3)\, p(a_1,a_2\vert x_1,x_2,x_3) =  p(a_3 \vert a_1,a_2, x_1,x_2,x_3)\, p(a_1,a_2\vert x_1,x_2) \nonumber \\
 & \forall\  a_1,\ a_2,\ a_3, \ x_1, \ x_2,  x_3,
\end{align}
where the first equality follows from  Bayes rule and the second one from the fact that $\boldsymbol{p}$ is no-signalling.

Assume now that $\boldsymbol{p}$ is produced by a Bayesian network with respect to one of the three IO BDAGs $\{(1),(2),(1,2,3)\}$, $\{(1),(2,3),(1,2,3)\}$, or $\{(1,3),(2,3),(1,2,3)\}$. Then, the marginal distribution $p(a_1,a_2\vert x_1,x_2)$ for Alice and Bob, the second factor in the RHS of Eq. \eqref{eq:proof_collapse}, is restricted to the same set of (LHV bipartite) correlations, regardless of which one out of the three BDAGs does indeed generate $\boldsymbol{p}$, as proven in the end of Sec. \ref{sec:7_quantum}. On the other hand, note, in addition, that, for all three IO BDAGs, the term $p(a_3 \vert a_1,a_2, x_1,x_2,x_3)$, the first factor in the RHS of Eq. \eqref{eq:proof_collapse}, has no restriction whatsoever and spans the whole set of conditional probability distributions of $c$ given $a_1$, $a_2$, $x_1$, $x_2$, and $x_3$. This follows from the fact that, for all three BDAGs, Charlie has access to Alice and Bob`s inputs as well as to the hidden variable, so he can reproduce any arbitrary conditional distribution of the form $p(a_3 \vert a_1,a_2, x_1,x_2,x_3)$. In other words, for neither of the two factors in the RHS of Eq. \eqref{eq:proof_collapse} do the  arrows from Charlie to Alice or Bob provide any extra capability at reproducing no-signaling correlations. This implies that $\boldsymbol{p}$ itself is restricted to the same set of correlations for all three BDAGs.

The same argument holds for all the other party-exchange symmetries of $\{(1),(2),(1,2,3)\}$, $\{(1),(2,3),(1,2,3)\}$, or $\{(1,3),(2,3),(1,2,3)\}$, and, clearly, also for convex combinations of correlations produced by them. This proves the $\rightarrow$ implications.
\end{proof}

As a final comment, we note that the last proof generalises straightforwardly  to the case of arbitrary $N$. In other words, all causal Bell classes represented by IO BDAGs containing a star, i.e., for which all inputs go to the output of one party while any other locality relaxation involves the latter party, are equivalent:
\begin{align}
\nonumber
\{(1),(2), \hdots,(N-1), (1,2, \hdots ,N-1,N)\}&\leftrightarrow&\{(1),(2), \hdots, (N-2), (N-1,N), (1,2, \hdots ,N-1,N)\} \\
\nonumber
&\leftrightarrow&\\
\nonumber
&\ \ \vdots&\\
&\leftrightarrow&\{(1,N),(2,N), \hdots, (N-1,N), (1,2, \hdots ,N-1,N)\}.
\end{align}

\section{Number of causal Bell classes in the fourpartite case}
\label{sec:Four-partite}
In this section, as a further example of the applicability of our machinery, we list all the IO BDAGs, excluding party-exchange redundancies, that appear for $N=4$. The complete hierarchy possesses a total of 52 causal Bell classes, taking into account both nonsignaling interesting as well as nonsignaling boring ones, and disregarding collapses among different classes.

\begin{table}[h]
  \begin{tabular}{|c|c|}
    \hline
Level &  Number of causal Bell classes  \\ \hline
0	 & 	1 		 	 	 \\ \hline
1	 & 	1			 	 	 \\ \hline
2	 & 	5			 	 	 \\ \hline
3	 & 	13			 	 	 \\ \hline
4	 & 	27			 	 	 \\ \hline
5	 & 	38			 	 	 \\ \hline
6	 & 	48			 	 	 \\ \hline
7	 & 	38			 	 	 \\ \hline
8	 & 	27			 	 	 \\ \hline
9	 & 	13			 	 	 \\ \hline
10	 & 	5			 	 	 \\ \hline
11	 & 	1			 	 	 \\ \hline
12	 & 	1			 	 	 \\ \hline
\end{tabular}
\caption{Table with all the IO BDAGs that arise in the fourpartite scenario excluding party-exchange redundancies. The complete hierarchy possesses 52 causal Bell classes, including both nonsignalling boring and interesting ones, distributed in 12 levels.}
\label{table_4}
\end{table}
\end{widetext}
\end{document}